\newtheorem{thm}{Theorem}
\newtheorem{lm}[thm]{Lemma}
\newtheorem{df}[thm]{Definition}
\newtheorem{cor}[thm]{Corollary}
\numberwithin{equation}{section}
\numberwithin{thm}{section}
\theoremstyle{remark} \newtheorem*{rem}{Remark}
\newcommand{\Abar}{\overline{\mathcal {A}}}
\newcommand{\Af}{\mathfrak{A}}
\newcommand{\C}{\mathbb{C}}
\newcommand{\R}{\mathbb{R}}
\newcommand{\h}{{\cal H}}
\newcounter{mnotecount}[section]
\begin{document}
\title{New diffeomorphism invariant states on a holonomy-flux algebra\footnote{This is an author-created, un-copyedited version of an article accepted for publication in Classical and Quantum Gravity. IOP Publishing Ltd is not responsible for any errors or omissions in this version of the manuscript or any version derived from it. The definitive publisher authenticated version is available online at http://dx.doi.org/10.1088/0264-9381/27/22/225005}}
\author{ Micha{\l} Dziendzikowski$^a$, Andrzej Oko{\l}\'ow$^b$}
\date{December 7, 2009}

\maketitle
\begin{center}
{\it  Institute of Theoretical Physics, Warsaw University\\ ul. Ho\.{z}a 69, 00-681 Warsaw, Poland\smallskip\\ $^a\!$ mdzi@fuw.edu.pl, $^b\!$ oko@fuw.edu.pl}
\end{center}
\medskip

\begin{abstract}
The theorem by Lewandowski {\em et al.} stating uniqueness of a diffeomorphism invariant state on an algebra of quantum observables for background independent theories of connections is based on some technical assumptions imposed on the algebra and the diffeomorphisms. In this paper we present a class of diffeomorphism invariant states on an algebra of this sort, which exist when the algebra and the diffeomorphisms satisfy alternative assumptions. 
\end{abstract}

\section{Introduction}

As commonly known canonical quantization of a classical theory begins with a choice of elementary classical observables, that is, functions on the phase space of the theory which are supposed to have unambiguous quantum analogs. In the next step one uses the classical observables to generate a $*$-algebra of elementary quantum observables. Once a $*$-representation of the $*$-algebra on a Hilbert space  is chosen the kinematics of the corresponding quantum theory is thereby defined. Definition of dynamics requires construction of a Hamilton operator on the Hilbert space, but often it has to be preceded by defining constraint operators and finding all physical states (that is, states annihilated by the constraint operators). In general, the choice of elementary classical observables is not unique and there are also many $*$-representations of the algebra of quantum observables. Moreover, the definition of the Hamiltonian and constraint operators is also not always unique since it may e.g. involve regularizations. Thus, given a classical theory, a quantum model resulting from canonical quantization procedure is far from being unique.   

In this paper we address some aspects of the issue of uniqueness of quantum models obtained by canonical quantization of diffeomorphism invariant theories of connections\footnote{A diffeomorphism invariant theory of connections means  a theory in a Hamiltonian form such that $(i)$ its configuration space is a space of connections on a principal bundle $P(\Sigma,G)$, where $\Sigma$ is a base manifold, and $G$ is a Lie group (i.e. the structure group of the bundle and the connections), $(ii)$ there exist Gauss and vector constraints imposed on the phase space which ensures, respectively, gauge and diffeomorphism invariance of the theory \cite{ALMMT}.}. These aspects are  $(i)$ the choice of elementary classical observables and $(ii)$ the choice of a $*$-representation of the corresponding algebra of quantum observables. Our interest in this issue is motivated by the fact that the class of diffeomorphism invariant theories of connections includes the Einstein's theory of general relativity as its most important member. Thus the results presented in this paper may turn out to be relevant for the uniqueness issue of Loop Quantum Gravity (LQG) being an advanced attempt to apply the rules of canonical quantization to general relativity  \cite{ALMMT,BIQ,Thmnn}. 

The Ashtekar-Corichi-Zapata (ACZ) algebra \cite{ACZ} is commonly accepted as an algebra of elementary classical observables for diffeomorphism invariant theories of connections. The configuration part of the phase space of such a theory is the space $\cal A$ of connections on a principle bundle $P(\Sigma,G)$ (where $\Sigma$ is the base manifold, and $G$---the structure group), and the momentum part consists of vector densities on $\Sigma$ valued in the dual $\mathfrak{g}^*$ of the Lie algebra $\mathfrak{g}$ of the Lie group $G$. The ACZ algebra consists of so called cylindrical functions defined on the configuration space  and derivative operators called flux operators acting on the functions---the operators correspond to momentum degrees of freedom. Each cylindrical function is associated with some paths embedded in $\Sigma$, while each flux operator is specified by a choice of an oriented face $S\subset \Sigma$ with codimension $1$ equipped with a {\em smearing function} $f:S\to\mathfrak{g}$. It has to be emphasized that, given a theory, the ACZ algebra is still not unique, since there is some freedom in specifying the class of paths, faces and smearing functions. For example, given real-analytic $\Sigma$, the paths and the faces can be chosen to be either analytic or semianalytic{\footnote{{\em Semianalytic} means roughly {\em piecewise analytic}. For a precise definition see e.g. \cite{LOST}.}. Moreover, one can choose the smearing functions on the faces to be either smooth or continuous or one can require them to have compact supports or give up this requirement. 

The $*$-algebra of elementary quantum observables generated by the ACZ algebra is called {\em holonomy-flux algebra}. Here we are concerned with the issue of  uniqueness of its $*$-representations on Hilbert spaces. Of course, the question has to be narrowed to the class of irreducible or cyclic representations, but even then there seem to exist many inequivalent representations. Therefore one has to  apply some other criteria in order to restrict the number of admissible representations. An inescapable criterion is that of (suitably defined) covariance of a representation with respect to gauge transformations of the theory which are generated by the Gauss and vector constraints\footnote{From the viewpoint of bundle theory the gauge transformations are generated by automorphisms of the principal bundle. If the bundle is trivial then the group of bundle automorphisms is generated by the group of Yang-Mills gauge transformations (generated by the Gauss constraint) and the group of diffeomorphisms of the base manifold (generated by the vector constraint).}. If it is combined with the requirement that a representation is generated via the Gelfand-Naimark-Segal (GNS) construction from a state on the holonomy-flux algebra then covariance of the representation follows from invariance of the state with respect to the transformations. Then one can prove the following uniqueness theorem \cite{LOST}:

\begin{thm}
Assume that the base manifold $\Sigma$ is semianalytic. Let the paths and the faces defining  elements of the ACZ algebra be semianalytic, and the smearing functions on the faces are of compact support. Then on the resulting holonomy-flux algebra there exists a unique state invariant with respect to the gauge transformations which include semianalytic diffeomorphisms of $\Sigma$.   
\label{lost}
\end{thm}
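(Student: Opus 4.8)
The plan is to exhibit one invariant state explicitly and then to show that every invariant state must equal it. For existence I would take the Ashtekar--Lewandowski state $\omega_{AL}$, defined as the vector state of the constant function $1$ in the standard representation of the holonomy-flux algebra $\Af$ on $L^2(\Abar,\mu_{AL})$, where $\mu_{AL}$ is the Ashtekar--Lewandowski (projective-limit Haar) measure, cylindrical functions act by multiplication, and each flux $\hat X_{S,f}$ acts as its associated invariant derivation. Then $\omega_{AL}(A)=\langle 1,\hat A\,1\rangle$, so positivity and normalization are automatic. Since a flux is a first-order derivation it annihilates constants, $\hat X_{S,f}1=0$; hence $\omega_{AL}$ reduces to $\int_{\Abar}a\,d\mu_{AL}$ on $\Cyl$ and vanishes on every monomial containing at least one flux. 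Gauge and diffeomorphism invariance follow from invariance of $\mu_{AL}$ together with the covariance law $U_\varphi\hat X_{S,f}U_\varphi^{-1}=\hat X_{\varphi(S),\,f\circ\varphi^{-1}}$, which sends flux monomials to flux monomials and therefore preserves the defining vanishing condition.

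For uniqueness, let $\omega$ be any invariant state with GNS data $(\pi_\omega,\mathcal H_\omega,\Omega_\omega)$. Restricting $\omega$ to the abelian subalgebra $\Cyl$ gives a diffeomorphism-invariant, normalized, positive functional, hence by Riesz--Markov representation a diffeomorphism-invariant probability measure on $\Abar$. The essential input is uniqueness of the invariant measure: because any two semianalytic edges can be mapped to one another by a compactly supported semianalytic diffeomorphism, and such diffeomorphisms act transitively enough on finite edge configurations at a vertex, the only invariant (regular, normalized) measure is $\mu_{AL}$. Thus $\omega|_{\Cyl}=\omega_{AL}|_{\Cyl}$. This is exactly where semianalyticity is indispensable: analytic diffeomorphisms are too rigid to be localized in this way, so the argument genuinely requires the larger semianalytic category.

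The remaining task is to show $\omega$ annihilates every flux-containing monomial, and I would reduce this to the single statement $\pi_\omega(\hat X_{S,f})\Omega_\omega=0$ for all $S,f$, equivalently $\omega(\hat X_{S,f}^*\hat X_{S,f})=0$. Indeed, granting this, the derivation relation $\hat X_{S,f}\hat a=\hat a\,\hat X_{S,f}+\widehat{X_{S,f}a}$ with $\widehat{X_{S,f}a}\in\Cyl$ lets one bring any element into the normal form $\sum\hat a_{(n)}\hat X_{1}\cdots\hat X_{n}$ with the fluxes on the right; the rightmost flux then annihilates $\Omega_\omega$, so each term with $n\ge 1$ has vanishing expectation. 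Consequently $\pi_\omega(\Af)\Omega_\omega=\pi_\omega(\Cyl)\Omega_\omega$, the vector $\Omega_\omega$ is cyclic for $\Cyl$, the flux acts as the AL derivation on $\pi_\omega(\Cyl)\Omega_\omega$, and the representation is forced to be the AL one with $\Omega_\omega\leftrightarrow 1$; hence $\omega=\omega_{AL}$.

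I expect this second-moment vanishing to be the main obstacle. Positivity gives only one-sided information, $\omega(\hat X_{S,f}^*\hat X_{S,f})=\|\pi_\omega(\hat X_{S,f})\Omega_\omega\|^2\ge 0$, and a mere orientation-reversing (reflection) diffeomorphism, which sends $\hat X_{S,f}\mapsto-\hat X_{S,f}$ and so kills the linear moment $\omega(\hat X_{S,f})$, says nothing about the square. To force the quadratic form to vanish I would exploit additivity of fluxes under subdivision of the surface, $\hat X_{S,f}=\sum_i\hat X_{S_i,f}$, together with compactly supported semianalytic diffeomorphisms displacing the pieces $S_i$ independently and relating their contributions; combined with $G$-covariance (Haar invariance under the adjoint action), this constrains the covariance tensor $\omega(\hat X_{S,f}^*\hat X_{S,g})$ so tightly that positivity leaves only the zero solution. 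The delicate points---arranging the localized diffeomorphisms, tracking the left/right-invariant vector-field signs that depend on edge orientation and on which side of $S$ an edge lies, and checking that the displaced pieces genuinely decouple---are where the real work of the proof concentrates.
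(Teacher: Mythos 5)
First, a caveat about the comparison: the paper does not prove Theorem~\ref{lost} at all --- it is quoted from \cite{LOST} as the starting point of the discussion --- so your proposal can only be measured against the proof given there and against the machinery this paper actually develops (notably Lemma~\ref{lm-real}). Your overall architecture (exhibit the Ashtekar--Lewandowski state; reduce uniqueness to $\pi_\omega(\hat{X}_{S,f})\Omega_\omega=0$; recover the AL representation) is the right one, and the existence half is fine. But your first uniqueness step rests on a false claim: it is not true that $\mu_{\rm AL}$ is the only diffeomorphism-invariant regular probability measure on $\Abar$. The Dirac measure concentrated at the trivial point of $\Abar\cong Hom(\mathcal{P},G)$ (the morphism assigning the identity of $G$ to every edge) is diffeomorphism invariant, gauge invariant, normalized and regular, and many further invariant measures can be built on invariant subsets of $\Abar$. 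Transitivity of compactly supported diffeomorphisms on edges only constrains finite-dimensional pushforwards of the measure; it cannot determine it. The correct logic --- used both in \cite{LOST} and, in miniature, in this paper's Lemma~\ref{lm-real} --- runs in the opposite direction: one first shows that $\omega(\widehat{X_{S,f}(\Psi)})=0$ for all $S$, $f$, $\Psi$ (a consequence of reality/positivity combined with the commutation relation \eqref{comm}), and only then invokes the Oko{\l}\'ow--Lewandowski theorem \cite{AO_1,AO_2} that a normalized measure $\mu$ satisfying $\mu(X_{S,f}(\Psi))=0$ for all fluxes must equal $\mu_{\rm AL}$. Diffeomorphism invariance of $\omega|_{Cyl^\infty}$ by itself does not do this job.

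Second, the step you yourself flag as the crux --- $\omega(\hat{X}_{S,f}^{*}\hat{X}_{S,f})=0$ --- is left as a plan rather than an argument. Saying that subdivision, localized diffeomorphisms and $G$-covariance ``constrain the covariance tensor so tightly that positivity leaves only the zero solution'' names the ingredients of the \cite{LOST} proof without supplying the mechanism that actually forces the vanishing: one needs a quantitative relation (obtained by writing a flux as a sum of pieces related to one another and to a reference flux by compactly supported semianalytic diffeomorphisms) under which $\|\pi_\omega(\hat{X}_{S,f})\Omega_\omega\|^{2}$ is forced to scale inconsistently unless it is zero. This is precisely where the compact support of $f$ and the semianalytic (localizable) category enter irreplaceably --- as the present paper demonstrates by constructing, in the analytic/noncompactly-supported setting, invariant states with $\omega(\hat{X}_{S,f})\neq 0$. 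So your diagnosis of where the difficulty lies is accurate, but the proposal as written proves the easy parts, mis-derives the measure-uniqueness step, and defers the essential estimate.
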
      

A claim that this theorem completely solves the uniqueness problem of the kinematical framework of quantum models under consideration would be rather unjustified. There are two main objections against such a claim:
\begin{enumerate}
\item as pointed out by Varadarajan \cite{Var} the theorem concerns invariant states and consequently uniqueness is established only among covariant {\em GNS representations} of the holonomy-flux algebra.
\item the theorem is based on the assumption of compact support of smearing functions, which seems to be rather technical. 
\end{enumerate}
It was shown in \cite{LOST} that in the case of a principle bundle with a {\em two-dimensional} oriented base manifold and the structure group $U(1)$ Theorem \ref{lost} loses its validity once we extend the ACZ algebra by admitting smearing functions with non-compact support: the state given by Theorem \ref{lost} admits then non-unique extensions. The authors of \cite{LOST} rise also a question whether these non-unique extensions are a peculiarity of the dimension of the base manifold equal $2$.

The goal of our research \cite{mich} was to check whether  in the case of a base manifold of dimension greater than $2$ the uniqueness theorem holds once we give up the requirement of compact support of smearing functions. In the present paper we show that if the ACZ algebra is built over the bundle $P=\R^D\times U(1)$, $D\geq 3$ and if
\begin{enumerate}
\item we allow the smearing functions to be supported also on some non-compact sets,
\item paths and faces defining elementary observables are {\em analytic} 
\end{enumerate}
then on the resulting holonomy-flux algebra there exists many distinct states invariant with respect to the gauge transformations which do not include any other diffeomorphisms of $\R^D$ than {\em analytic} ones. Our construction is perhaps valid also in the category of {\em semianalytic} paths and faces (and, consequently, semianalytic diffeomorphisms) but a proof is still lacking in this case.

The paper is organized as follows. In Section 2 we remind the construction of the ACZ and holonomy-flux algebras. In Section 3 we introduce an ansatz for a functional on holonomy-flux algebras, investigate its properties and finally prove that it actually defines diffeomorphism invariant states on an appropriately constructed holonomy-flux algebra. Section 4 contains a short discussion of the result.

\section{Elementary classical and quantum observables \label{sec-el}}

In the present section we shall briefly recall the construction of algebras of elementary and quantum observables, that is, the ACZ algebra and the holonomy-flux $*$-algebra, respectively \cite{ACZ,Thmnn,LOST}. For the sake of simplicity we assume the $G$-bundle $P$ to be trivial 
\[
P=\Sigma\times G,
\]
with a semianalytic $\Sigma$ and a {\em compact connected} Lie group $G$.      

\subsection{Cylindrical functions}

The properties of connections on a given principal bundle $P$ provide us 
with some natural algebra of functions separating points of the classical configuration space $\mathcal{A}$. To define these functions we use edges embedded in $\Sigma$:   

\begin{df}
(i) A semianalytic edge $e$ is a one-dimensional semianalytic embedded $C^{1}$ submanifold\footnote{For the definition of a semianalytic embedded submanifold see \cite{LOST}.} of $\Sigma$ with 2-point boundary.  (ii) An edge is a one-dimensional oriented embedded $C^{0}$ submanifold of $\Sigma$ with 2-point boundary given by a finite union of semianalytic edges. 
\end{df}

We say that two edges are independent if and only if they either $(i)$ have no common point or $(ii)$ every common point of them is an endpoint of both edges. 
\begin{df}
A graph $\gamma$ is a finite set $\gamma=\{e_{1},\ldots,e_{n}\}$ of pairwise independent edges.
\end{df}

In the case of the trivial bundle $P$ each parallel transport (holonomy) $A_{e}$ along an edge $e$ given by a connection $A\in\mathcal{A}$ can be unambiguously described by an element $h_e(A)$ of the structure group $G$. Let us denote as $\mathcal{A}_{e}\subset G$ the subset of $G$ obtained as holonomies of all smooth connections $A\in\mathcal{A}$ along the edge $e$. Since $G$ is connected $\mathcal{A}_{e}\simeq G$ \cite{al-hoop}. 

\begin{df}
  Given a graph $\gamma=\{e_{1},\ldots, e_{n} \}$ and a smooth function $\psi:G^{n}\rightarrow \mathbb{C}$ we define a smooth cylindrical function $\Psi:\mathcal{A}\rightarrow \mathbb{C}$ as
\begin{equation}
\Psi(A):=\psi(h_{e_{1}}(A),\ldots,h_{e_{n}}(A))\equiv h_{\gamma}^{*}\psi(A).
\end{equation}
We say that $\Psi$ is compatible with the graph $\gamma$. We denote by $Cyl^{\infty}$ the set of all smooth cylindrical functions.
\end{df}
One can easily see from the definition that any cylindrical function $\Psi$ is compatible with infinitely many graphs $\gamma$. In particular, given $\Psi_{1},\Psi_{2}$, there exists a graph $\gamma$ compatible with both, i.e $\Psi_{1}=h_{\gamma}^{*}\psi_{1}$ and $\Psi_{2}=h_{\gamma}^{*}\psi_{2}$. 

It turns out that $Cyl^\infty$ with the natural linear structure and the natural multiplication is an algebra. Elements of this algebra, that is, cylindrical functions are chosen to be classical configuration observables. Later we will see that, roughly speaking, the algebra $Cyl^\infty$ forms the 'holonomy part' of the holonomy-flux algebra.

One can also easily prove \cite{ai} that $Cyl^\infty$ can be equipped with a structure of a normed $*$-algebra with an involution 
\begin{equation}
\Psi^*(A):=\overline{\Psi(A)}
\label{invol}
\end{equation} and a norm 
\begin{equation}
||\Psi||:=\sup_{(g_1,\ldots,g_n)\in G^n}|\psi(g_1,\ldots,g_n)|,
\end{equation}
where $\Psi=h^{*}_{\gamma}\psi$. It can be further completed in the norm to a $C^{*}$-algebra. The Gelfand spectrum $\Abar$ of the $C^*$-algebra have an important algebraic characterization as the set $Hom(\mathcal{P},G)$ of morphisms from an appropriately defined groupoid $\mathcal{P}$ of edges in $\Sigma$ to the gauge group $G$ \cite{Thmnn}. Thanks to this characterization one can embed the space of connections $\mathcal{A}$ in the spectrum $\Abar$. Moreover it was shown that this embedding is dense \cite{AL_1}. In the sequel we will use a measure on $\Abar$ as a part of a construction of some states on a holonomy-flux algebra.

\subsection{Flux operators} 

The second step in defining the ACZ algebra is a choice of classical observables associated with the momentum part of the phase space. As explained in \cite{ACZ} functionals on the phase space linear in the momentum variable define via the Poisson bracket derivation operators on $Cyl^{\infty}$. In particular, a flux of the momentum field across an oriented face in $\Sigma$ of codimension $1$ is linear in the momentum field. The corresponding derivation operator on $Cyl^\infty$ is called a flux operator. Let us now proceed towards a precise definition of the operator.

\begin{df}
A graph $\gamma$ is adapted to a semianalytic embedded submanifold $S$ if for each edge $e_{i}\in \gamma$ one of the following conditions is true:
\begin{itemize}
\item [(i)] $e_{i}\subset \overline{S}$;
\item [(ii)] $e_{i}\cap S$ is the source of the oriented edge $e_i$; 
\item [(iii)] $e_{i} \cap S=\varnothing$.
\end{itemize}
A graph $\gamma'$ is adaptable to $S$ if there exists a graph $\gamma$ adapted to $S$ such that both $\gamma'$ and $\gamma$ share the same points in $\Sigma$.
\label{adapted}
\end{df}

Denote $D\equiv\dim\Sigma$. 
\begin{df}
A face $S$ is a connected, $(D-1)$-dimensional semianalytic embedded $C^1$ submanifold of $\Sigma$ with compact closure and a fixed orientation of its normal bundle such that every graph in $\Sigma$ is adaptable\footnote{In the literature (see eg. \cite{LOST,area}) one can find statements which seem to mean that every (semi)analytic graph is adaptable to every (semi)analytic face. This is however not true --- see Appendix \ref{counter} for a counterexample.} to $S$.   
\label{face}
\end{df}
The orientation allows us to distinguish ''upper'' $\mathcal{U}^{+}$ and ''bottom'' $\mathcal{U}^{-}$ parts of a sufficiently small neighborhood $\mathcal{U}$ open in $\Sigma$ of an arbitrary point $x\in S$. 

Let us fix a face $S$, a smearing function $f:S\to\mathfrak{g}$ of {\em compact} support on $S$ and a cylindrical function $\Psi$. There exists a graph $\gamma$ compatible with $\Psi$ and adapted to $S$. Denote by $s$ the source of an edge $e\in\gamma$. Let
\[
\theta^t(h_e(A)):=
\begin{cases}
h_e(A)\exp(tf(s)) & \text{if $e\cap S=s$ and $e$ lies 'up' the face $S$ }\\
h_e(A)\exp(-tf(s)) & \text{if $e\cap S=s$ and $e$ lies 'down' the face $S$ }\\
h_e(A) & \text{in the remaining cases}
\end{cases}
\]    
where 'up' and 'down' refer to the orientation of $S$.  
\begin{df}
A flux operator $X_{S,f}$ acts on the cylindrical function $\Psi=h^*_\gamma\psi$, where $\gamma=\{e_1,\ldots,e_n\}$ is adapted to $S$, as follows:
\[
(X_{S,f}\Psi)(A)=\frac{d}{dt} \psi(\,\theta^t(h_{e_1}(A)),\ldots,\theta^t(h_{e_n}(A))\,)\Big|_{t=0}.
\]      
\label{flux-op}
\end{df}

\subsection{ACZ algebra and  the holonomy-flux algebra \label{sec-alg}} 

The ACZ algebra (which plays the role of an algebra of elementary classical observables) is defined as a real linear space of pairs $(\Psi,X)$, where $\Psi$ is a cylindrical function and $X$ a differential operator on $Cyl^\infty$ generated by a finite number of flux operators:
\[
X=[X_{S_1,f_1},[X_{S_2,f_2},\ldots,[X_{S_{n-1},f_{n-1}},X_{S_n,f_n}], \ldots,]],
\]
where $[\cdot,\cdot]$ denotes the commutator. The algebra is equipped with a Lie bracket
\[
[(\Psi,X),(\Psi',X')]=(X'\Psi-X\Psi',[X,X'])
\]
corresponding to the Poisson structure on the phase space. Note that we constructed the algebra using $(i)$ cylindrical functions defined by means of graphs with {\em semianalytic} edges and $(ii)$ flux operators given by {\em semianalytic} faces and smearing functions of {\em compact support}, i.e. the algebra defined here is precisely the ACZ algebra considered in \cite{LOST}. Let us emphasize once again that this is not the only choice: we can consider e.g. only analytic edges and faces and also allow some smearing functions of noncompact support.  

Once the ACZ algebra is given a strictly defined procedure \cite{LOST} leads from it to the holonomy-flux algebra $\mathfrak{A}$ as a $*$-algebra of elementary quantum observables. One can think of the algebra $\mathfrak{A}$ as an algebra over $\C$ generated by elements $\{\hat{\Psi},\hat{X}_{S,f}\}$ which correspond to cylindrical functions and flux operators and commute to
\begin{equation}
[\hat{\Psi},\hat{X}_{S,f}]=i\widehat{X_{S,f}(\Psi)}. 
\label{comm}
\end{equation}
We have also an involution
\begin{equation}
\hat{\Psi}^{*}=\widehat{{\Psi}^*} \quad\hat{X}_{S,f}^{*}=\hat{X}_{S,f},
\end{equation}
where $\Psi^*$ is given by \eqref{invol}. The algebra possesses a unit given by the constant cylindrical function of value equal to $1$. It turns out that every element of $\Af$ is a finite sum of elements of the following form \cite{LOST}:
\begin{equation} 
\hat{\Psi}, \ \hat{\Psi}_1 \hat{X}_{S_{11},f_{11}},\ \hat{\Psi}_2\hat{X}_{S_{21},f_{21}}\hat{X}_{S_{22},f_{22}}, \ \ldots\ , \ 
\hat{\Psi}_{k}\hat{X}_{S_{k1},f_{k1}}\ldots \hat{X}_{S_{kk},f_{kk}},\ldots 
\label{hfa-span}
\end{equation}

\subsection{Representations of the holonomy-flux algebra}

Canonical quantization requires us to find a $*$-representation of the algebra of elementary quantum observables on a Hilbert space. Such a $*$-representation can be defined as follows \cite{AO_1}: 

\begin{df}
Let $L(\h)$ be a space of linear operators on a Hilbert space $\h$. We say that a map $\pi:\Af\rightarrow L(\h)$ is a $*$-representation of $\Af$ on the Hilbert space $\h$ if: 
\begin{enumerate}
\item there exists a dense subspace ${\cal D}$ of $\h$ such that   
\[
{\cal D}\subset\bigcap_{{a}\in\Af}[\ D(\pi({a}))\cap D(\pi({a})^*)\ ], 
\]
where $D(\pi({a}))$ denotes the domain of the operator $\pi({a})$;
\item for every ${a},{b}\in\Af$ and $\lambda\in\C$ the following conditions are satisfied on ${\cal D}$:
\begin{alignat*}{3}
\pi({a}+{b})&=\pi({a})+\pi({b}),&\qquad\pi(\lambda{a})&=\lambda\pi({a}),\\
\pi({a}{b})&=\pi({a})\pi({b}),&\qquad \pi({a}^*)&=\pi({a})^*.
\end{alignat*}  
\end{enumerate}
\label{repr-df}
\end{df}

As emphasized in the introduction we are interested in those $*$-representations of $\Af$ on a Hilbert space which are covariant with respect to gauge transformations of a theory under consideration. Since we have assumed that the bundle underlying the theory is trivial we will restrict our interest to the gauge transformations generated by diffeomorphisms of the base manifold $\Sigma$ and will consider {\em diffeomorphism covariant} $*$-representations of $\Af$. Before we will give a meaning to the notion of diffeomorphism covariance we have to define an action of the diffeomorphisms on $\Af$.      

In fact, there is a natural automorphism $\alpha_{\varphi}$ of the holonomy-flux algebra $\mathfrak{A}$ corresponding to a diffeomorphism $\varphi$ of $\Sigma$, defined on the generators of $\mathfrak{A}$ as follows \cite{Thmnn, LOST}:
\begin{equation} \label{dif_action}
 \alpha_{\varphi}\hat{\Psi}:=\widehat{\varphi^{*}\Psi},\qquad \alpha_{\varphi}\hat{X}_{S,f}:=\hat{X}_{\varphi(S),(\varphi^{-1})^{*}f},
\end{equation}
where $\varphi^{*}\Psi:=h^*_{\varphi^{-1}(\gamma)}\psi$ provided $\Psi=h^*_\gamma\psi$   and the orientation of the face $\varphi(S)$ is induced from $S$ by the diffeomorphism $\varphi$. We have also the identity
\begin{equation}
\alpha_{\varphi_{1}}\circ\alpha_{\varphi_{2}}=\alpha_{\varphi_{1}\circ\varphi_{2}},
\end{equation}
which means that the map $\varphi\mapsto\alpha_\varphi$ is a representation of the group of diffeomorphisms of $\Sigma$ on $\Af$.

Following \cite{Var} we formulate

\begin{df} \label{df_covariance}
We say that a $*$-representation $\pi:\mathfrak{A}\rightarrow\mathcal{L}(\h)$ of the holono\-my-flux algebra on a Hilbert space $\h$ is diffeomorphism covariant if on the Hilbert space there exists a unitary representation  $\varphi\mapsto U_\varphi$ of the group $\rm Diff$ of diffeomorphisms of $\Sigma$ such that
\begin{itemize}
\item [(i)]  for every $\varphi\in{\rm Diff}$  the operator $U_\varphi$ preserves the common domain $\mathcal{D}\subset\h$ (see Definition \ref{repr-df}) i.e. 
\[
U_{\varphi}\mathcal{D}\subset\mathcal{D};
\]
\item [(ii)] for every $\varphi\in{\rm Diff}$ and for every $a\in\mathfrak{A}$ 
\begin{equation}
U_{\varphi}\pi(a)U_{\varphi}^{-1}=\pi(\alpha_{\varphi}(a)).
\label{u-alpha}
\end{equation}
\end{itemize}
\end{df}

A class of  diffeomorphism covariant representations of $\Af$ can be generated via the GNS construction (see e.g. \cite{Bra_Rob,LOST}) from {\em diffeomorphism invariant states} on $\Af$. Recall that a state $\omega$ on a (unital) $*$-algebra is a linear $*$-preserving functional on the algebra valued in complex numbers which maps the unit of the algebra to $1$ and is positive, i.e. for every element $a$ of the algebra
\[
\omega(a^*a)\geq 0.
\] 
We say that a state $\omega$ on the holonomy-flux $*$-algebra $\mathfrak{A}$ is diffeomorphism invariant if for every diffeomorphism  $\varphi\in{\rm Diff}$  and every $a\in\mathfrak{A}$ 
\begin{equation}
 \omega(a)=\omega(\alpha_{\varphi}(a)).
\label{diff-inv}
\end{equation}

Given a state $\omega$ on $\Af$ the GNS construction provides us with a triple $(\h,\pi,\Omega)$ such that $(i)$ $\pi$ is a cyclic $*$-representation of $\Af$ on the Hilbert space $\h$ (in the sense of Definition \ref{repr-df}) with the cyclic vector $\Omega$ and $(ii)$ for every $a\in\Af$        
\[
 \omega(a)=\langle\Omega|\pi(a)\Omega\rangle.
\]
Assume now that \eqref{diff-inv} holds for all $a\in\Af$ and a $\varphi\in{\rm Diff}$. Then there exists a unique unitary operator $U_\varphi$ on $\h$ such that \eqref{u-alpha} is satisfied for all $a$ and $U_\varphi\Omega=\Omega$ \cite{Bra_Rob}. If $\omega$ is diffeomorphism invariant then the map $\varphi\mapsto U_\varphi$ turns out to be a unitary representation of $\rm Diff$ on $\h$ and the GNS representation $\pi$ is a diffeomorphism covariant representation of $\Af$.

\section{New diffeomorphism invariant states}

Let us recall that our goal is to show that in the case of a $U(1)$-bundle with a three- or more dimensional base manifold $\Sigma$ there are many distinct diffeomorphism invariant states on a holonomy-flux algebra provided we have constructed the algebra by allowing flux operators $\{X_{S,f}\}$ to be defined also by some smearing functions with {\em noncompact} support. In other words we are going to show existence of many distinct diffeomorphism invariant states on a modified holonomy-flux algebra. On the other hand it should be emphasized that we are not going to find all diffeomorphism invariant states on such an algebra but only some of them. Therefore we will construct them by a slight modification of formulae describing the state given by Theorem \ref{lost} called {\em the standard state} and denoted by $\omega_0$ thereafter. 

Let us first describe how we are going to define new non-standard states, then this will suggest how to modify the holonomy-flux algebra (of course, sticking to the formalism presented in Section \ref{sec-el} we will suitably modify the ACZ algebra, which will result in the desired modification of the holonomy-flux algebra). 

\subsection{The ansatz for nonstandard states \label{non-st}}

The construction of nonstandard states we are going to present follows closely the construction of a nonstandard state described in \cite{LOST}.

Let us begin by recalling formulae \cite{LOST} describing the values of the standard state on the elements \eqref{hfa-span}:
\begin{align}
& \omega_0(\hat{\Psi})=\mu_{\rm AL}(\Psi), \ \ \text{for all} \ \ \Psi\in Cyl^\infty;\label{al-m}\\
& \omega_0(\hat{\Psi}\hat{X}_{S_1,f_1}\ldots\hat{X}_{S_m,f_m})=0 \ \ \text{for all $\Psi\in Cyl^\infty$ and all $\hat{X}_{S_i,f_i}$}.\label{fl-zero}
\end{align}
In the first formula  $\mu_{\rm AL}(\Psi)$ stands for the value of an integral of $\Psi$ given by the Ashtekar-Lewandow\-ski measure $d\mu_{\rm AL}$ on the space $\Abar$ (each smooth cylindrical function, though defined on the space $\cal A$ of smooth connections, can be naturally seen as a function on $\Abar$). If a cylindrical function $\Psi=h^*_\gamma\psi$, where $\psi$ is a function on $G^n$, then      
\begin{equation}
\mu_{\rm AL}(\Psi):=\int_{\Abar}\Psi\,d\mu_{\rm AL}=\int_{G^n}\psi \,d\mu_H
\end{equation}
where $d\mu_H$ is the normalized Haar measure on $G^n$.  

In particular, the condition \eqref{fl-zero} means that
\begin{align*}
& \omega_0(\hat{X}_{S,f})=0 \ \ \text{for all $\hat{X}_{S,f}$}.
\end{align*}  
We will set the value of a new state on some flux operators to be nonzero numbers---these operators will be defined by smearing functions with noncompact support.

We choose the following simple ansatz for a nonstandard state:
\begin{equation}
\begin{aligned}
& \omega(\hat{\Psi})=\mu(\Psi), \ \ \text{for all} \ \ \Psi\in Cyl^\infty;\\
& \omega(\hat{X}_{S,f})\neq 0 \ \ \text{(in general)};\\ 
& \omega(\hat{\Psi}\hat{X}_{S_1,f_1}\ldots\hat{X}_{S_m,f_m})=\omega(\hat{\Psi})\omega(\hat{X}_{S_1,f_1})\ldots\omega(\hat{X}_{S_m,f_m})\ \ \text{for all $\hat{\Psi}$ and $\hat{X}_{S,f}$},
\end{aligned}
\label{new-st}
\end{equation}
where $\mu(\Psi)$ is given by an integral of $\Psi$ defined by a {\em normalized} measure $d\mu$ on $\Abar$. 

It is obvious that to obtain a diffeomorphism invariant state the assignment
\begin{equation}
\hat{\Psi}\mapsto\omega(\hat{\Psi}), \ \ \ \hat{X}_{S,f}\mapsto\omega(\hat{X}_{S,f})
\label{assign}
\end{equation}
has to be diffeomorphism invariant. Therefore the measure $d\mu$ on $\Abar$ used to define the value $\omega(\hat{\Psi})$ also has to be diffeomorphism invariant. To make the value $\omega(\hat{X}_{S,f})$ invariant with respect to diffeomorphisms we will set it to be equal to a limit of the smearing function $f:S\to u(1)\cong\R$ at a point distinguished in a diffeomorphism covariant manner. The idea is to remove an isolated point $x_0$ from a face $\tilde{S}$. Then the assignment 
\begin{equation}
(S,f)\mapsto \lim_{x\to x_0}f,
\label{assign-2}
\end{equation}
where $S:=\tilde{S}\setminus\{x_0\}$, is diffeomorphism invariant (provided the limit exists). Note that in order to get a nonzero limit the support of the function $f$ cannot be compact.  

There is however a difficulty here: although there is nothing wrong with \eqref{assign-2} the resulting assignment $\hat{X}_{S,f}\mapsto \lim_{x\to x_0}f$ is ambiguous: the identity $\hat{X}_{S,f}=\hat{X}_{-S,-f}$, where $-S$ is obtained from $S$ by the change of its orientation, gives
\[
\lim_{x\to x_0}f \mapsfrom\hat{X}_{S,f}=\hat{X}_{-S,-f}\mapsto -\lim_{x\to x_0}f.
\]
This means that to get rid of the ambiguity we have to distinguish one of the orientations of the face. It is clear that usually we cannot distinguish any orientation of a face in a {\em diffeomorphism invariant} manner which is necessary in our construction. Fortunately, we can do this if the face is e.g. a cone---we can assign to a cone {\em with the vertex removed} and with a fixed orientation the value of the limit \eqref{assign-2} setting $x_0$ to be the vertex of the cone. This assignment is unambiguous, diffeomorphism invariant and nontrivial (i.e. it gives nonzero value, in general).      

Thus to a flux operator based on a cone or a face of analogous properties we will assign the value of the limit \eqref{assign-2}, and in remaining cases we will set $\omega(\hat{X}_{S,f})=0$. 

The idea of constructing new nonstandard states seems to be rather simple. However, there is much more to be done: one has to show that the ansatz can be extended by linearity to a state on the algebra. The task is non-trivial not only because one has to prove the positivity of the resulting map---note that, although the elements \eqref{hfa-span} used in the formulae \eqref{new-st} span (any modification of) the holonomy-flux algebra, they {\em do not} form a basis of the algebra, therefore one also has to prove that the ansatz respect the linear dependence between the elements \eqref{hfa-span}. 

\subsection{Some properties of the ansatz}

As just mentioned the elements \eqref{hfa-span} span any holonomy-flux algebra irrespectively of details concerning the principle bundle, its structure group, the class of graphs, faces and smearing functions underlying a particular construction. Now we are going to describe some properties of the ansatz which are also independent on these details.

\begin{lm}
Suppose that $\omega$ is a linear functional on a holonomy-flux algebra $\mathfrak{A}$ which satisfies \eqref{new-st}. Then $\omega$ is real, i.e. $\omega(a^{*})=\omega(a)^{*}$ for all $a\in\Af$ if and only if $\omega(\hat{\Psi})=\mu_{\rm AL}(\Psi)$ for all cylindrical functions and $\omega(\hat{X}_{S,f})=\omega(\hat{X}_{S,f})^{*}$ for all flux operators.
\label{lm-real}
\end{lm}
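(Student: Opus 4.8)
The plan is to reduce everything to the spanning monomials \eqref{hfa-span}. Since these span $\Af$ and both $\omega$ and the involution $a\mapsto a^*$ are (conjugate-)linear, the identity $\omega(a^*)=\omega(a)^*$ holds for all $a$ as soon as it holds on each $a=\hat{\Psi}\hat{X}_{S_1,f_1}\cdots\hat{X}_{S_m,f_m}$; for the ``only if'' part it will even suffice to examine $m=0,1$. The computational engine is \eqref{comm}. Using $\hat{X}_{S,f}^*=\hat{X}_{S,f}$ and $\hat{\Psi}^*=\widehat{\Psi^*}$ one has
\[
a^*=\hat{X}_{S_m,f_m}\cdots\hat{X}_{S_1,f_1}\,\widehat{\Psi^*}.
\]
To apply the ansatz \eqref{new-st} I must rewrite $a^*$ with the cylindrical factor on the far left. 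Commuting $\widehat{\Psi^*}$ leftwards through the fluxes by \eqref{comm} produces one ``leading'' term $\widehat{\Psi^*}\hat{X}_{S_m,f_m}\cdots\hat{X}_{S_1,f_1}$ plus a sum of commutator terms, each of which replaces a pair $\hat{X}_{S,f}\,\widehat{\Phi}$ by $-i\,\widehat{X_{S,f}(\Phi)}$. Hence in every commutator term the surviving cylindrical factor is of the form $\widehat{X_{S,f}(\Phi)}$, i.e.\ a flux operator \emph{applied} to a cylindrical function. Applying \eqref{new-st} termwise, the leading term gives $\mu(\Psi^*)\prod_i\omega(\hat{X}_{S_i,f_i})$, while each commutator term gives a factor $\mu\big(X_{S,f}(\Phi)\big)$ times a product of flux-values.

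For the \emph{if} direction, assume $\mu=\mu_{\rm AL}$ and every $\omega(\hat{X}_{S,f})$ real. The crucial observation is that $\mu_{\rm AL}\big(X_{S,f}\Phi\big)=0$ for each cylindrical $\Phi$: on every group factor $X_{S,f}$ acts as the left-invariant vector field generating $g\mapsto g\exp(tf(s))$, and the integral of such a derivative against the (bi-invariant) Haar measure vanishes. Therefore all commutator terms drop out and $\omega(a^*)=\mu_{\rm AL}(\Psi^*)\prod_i\omega(\hat{X}_{S_i,f_i})$. Since $\mu_{\rm AL}$ is a positive measure, $\mu_{\rm AL}(\Psi^*)=\overline{\mu_{\rm AL}(\Psi)}$, and since the flux-values are real the product equals its own conjugate; hence $\omega(a^*)=\overline{\omega(a)}$.

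For the \emph{only if} direction, assume $\omega$ real. At $m=0$ the equality $\omega(\widehat{\Psi^*})=\overline{\omega(\hat{\Psi})}$ is automatic for a positive $\mu$, so this level gives no information. At $m=1$ the computation above yields
\[
\mu(\Psi^*)\,\omega(\hat{X}_{S,f})-i\,\mu\big(X_{S,f}(\Psi^*)\big)=\overline{\mu(\Psi)}\;\overline{\omega(\hat{X}_{S,f})}.
\]
Taking $\Psi\equiv 1$ (so $X_{S,f}(1)=0$ and $\mu(1)=1$) forces $\omega(\hat{X}_{S,f})=\overline{\omega(\hat{X}_{S,f})}$, the desired reality of the flux-values; feeding this back leaves $\mu\big(X_{S,f}(\Phi)\big)=0$ for all cylindrical $\Phi$ and all $S,f$. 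It remains to deduce $\mu=\mu_{\rm AL}$: by choosing a graph and a face so that exactly one edge meets the face at its source, $X_{S,f}$ realises an arbitrary left-invariant vector field on a single group factor, so the vanishing of all these integrals says that the pushforward of $\mu$ onto each $G^n$ is right-invariant, whence it equals the normalized Haar measure by uniqueness of Haar measure, i.e.\ $\mu=\mu_{\rm AL}$.

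The bookkeeping of commutator terms is routine once one notices they all carry an outermost flux derivative and hence integrate to zero against $\mu_{\rm AL}$. The genuinely delicate point is the final step of the converse: one must (a) realise enough left-invariant vector fields as flux operators within the given class of faces and smearing functions, and (b) invoke uniqueness of Haar measure to pin down $\mu=\mu_{\rm AL}$. This is the step where the precise geometry of admissible faces enters, and I expect it to be the main obstacle.
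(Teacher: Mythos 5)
Your proof is correct and follows essentially the same route as the paper's: reduce to the spanning monomials \eqref{hfa-span}, commute the cylindrical factor through the fluxes via \eqref{comm}, and observe that every commutator term carries an outermost flux derivative and hence dies against $\mu_{\rm AL}$ (the paper packages this as its identity \eqref{positivity_eq_1}). The only point of divergence is the last step of the converse, deducing $\mu=\mu_{\rm AL}$ from $\mu(X_{S,f}\Phi)=0$ for all $S$, $f$, $\Phi$: the paper simply cites earlier work of Oko{\l}{\'o}w and Lewandowski for this, and your Haar-uniqueness sketch is the right idea, but---as you yourself flag---realizing enough flux derivations on an arbitrary graph within the admissible class of faces is precisely the nontrivial content of those references, so that step should either be carried out in detail or cited.
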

\begin{proof}
Assume first that $\omega$ is real. Since $\hat{X}_{S,f}^{*}=\hat{X}_{S,f}$  the condition $\omega(\hat{X}_{S,f})=\omega(\hat{X}_{S,f})^{*}$ follows immediately. 

Consider now an element $\hat{\Psi}\hat{X}_{S,f}+\hat{X}^{*}_{S,f}\hat{\Psi}^{*}$. It is preserved by the involution, hence $\omega(\hat{\Psi}\hat{X}_{S,f}+\hat{X}^{*}_{S,f}\hat{\Psi}^{*})\in \mathbb{R}$. On the other hand using \eqref{new-st} and \eqref{comm} we get
\begin{equation*}
\omega(\Psi\hat{X}_{S,f}+\hat{X}^{*}_{S,f}\Psi^{*})=\omega(\Psi+\Psi^{*})\omega(\hat{X}_{S,f})-i\omega(\widehat{X_{S,f}(\Psi)})\in \mathbb{R}.
\end{equation*}
Thus for every $\Psi\in Cyl^\infty$ and for every flux operator 
\begin{equation*}
0=\omega(\widehat{X_{S,f}(\Psi)})=\mu(X_{S,f}(\Psi)).
\end{equation*}
It was proven in \cite{AO_1,AO_2} that in such a case $\mu=\mu_{AL}$ (note that we assumed that the measure $d\mu$ is normalized). 

Let us assume now that $\omega(\hat{\Psi})=\mu_{AL}(\Psi)$ and $\omega(\hat{X}_{S,f})=\omega(\hat{X}_{S,f})^{*}$. It is clear that $\omega$ is real on the subalgebra generated by cylindrical functions. Thus what remains to be done is to show the reality of $\omega$ on the elements
\begin{equation*}
\hat{\Psi}\hat{X}_{S_{1},f_{1}}\ldots \hat{X}_{S_{k},f_{k}}.
\end{equation*}
In the remaining part of the proof in order to make the notation more transparent we will denote $\hat{X}_k\equiv \hat{X}_{S_k,f_k}$.  

We begin with proving that
\begin{equation} \label{positivity_eq_1}
 \omega(\hat{X}_{1}\ldots\hat{X}_{k}\hat{\Psi}\hat{X}_{k+1}\ldots\hat{X}_{n})=\omega(\hat{\Psi}\hat{X}_{1}\ldots\hat{X}_{n}).
\end{equation}
Indeed,
\begin{multline*}
 \omega(\hat{X}_{1}\ldots\hat{X}_{k}\hat{\Psi}\hat{X}_{k+1}\ldots\hat{X}_{n})=\omega(\hat{X}_{1}\ldots(-i\widehat{{X}_{k}(\Psi)})\hat{X}_{k+1}\ldots\hat{X}_{n})+\\+\omega(\hat{X}_{1}\ldots\hat{X}_{k-1}\hat{\Psi}\hat{X}_{k}\ldots\hat{X}_{n})=\omega(\hat{\Psi}\hat{X}_{1}\ldots\hat{X}_{n})+\omega(\text{a sum of other terms}) 
\end{multline*}
where each of the other terms is of the form $\widehat {X'_{0}(\Phi)}\hat{X}'_{1}\ldots\hat{X}'_{m}$, where $X'_{0}(\Phi)\in Cyl^\infty$. But 
\[
\omega(\widehat{X_i(\Phi)})=\mu_{\rm AL}(X_i(\Psi))=0,
\]
hence taking into account the last property of \eqref{new-st} 
\[
\omega(\text{a sum of other terms})=0
\]
and  \eqref{positivity_eq_1} follows.

Applying \eqref{positivity_eq_1} we obtain:
\begin{multline*}
\omega((\hat{\Psi}\hat{X}_{1}\ldots \hat{X}_{k})^{*})=\omega(\hat{X}^{*}_{k}\ldots\hat{X}_{1}^{*}\hat{\Psi}^{*})=\omega(\hat{X}_{k}\ldots\hat{X}_{1}\hat{\Psi}^{*})=\omega(\hat{\Psi}^{*}\hat{X}_{k}\ldots \hat{X}_{1})
=\\=\omega(\hat{\Psi}^{*})\omega(\hat{X}_{1})\ldots \omega(\hat{X}_{k})=\omega(\hat{\Psi})^{*}\omega(\hat{X}_{1})^{*}\ldots \omega(\hat{X}_{k})^{*}= \\
=\omega(\hat{\Psi}\hat{X}_{1}\ldots \hat{X}_{k})^{*}. 
\end{multline*}
which completes the proof.
\end{proof}

\begin{lm} \label{lem_positivity}
 Let $\omega$ be a real linear functional on a holonomy-flux algebra. If $\omega$ satisfies  \eqref{new-st} then it is positive. Moreover, it is a state on the algebra.
\end{lm}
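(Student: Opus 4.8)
The plan is to prove positivity directly, by reducing $\omega(a^{*}a)$ to a manifestly nonnegative quantity, namely a squared norm in the Ashtekar--Lewandowski Hilbert space $L^2(\Abar,d\mu_{\rm AL})$. Throughout I would exploit that, since $\omega$ is assumed real, Lemma \ref{lm-real} already supplies $\mu=\mu_{\rm AL}$ and $\omega(\hat X_{S,f})\in\R$, and that the auxiliary identity \eqref{positivity_eq_1} is at our disposal.

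First I would write an arbitrary $a\in\Af$ as a finite linear combination $a=\sum_{\alpha}c_{\alpha}\,\hat{\Psi}_{\alpha}\hat{X}^{\alpha}_{1}\cdots\hat{X}^{\alpha}_{k_{\alpha}}$ of the spanning monomials \eqref{hfa-span}, allowing empty flux strings for the purely cylindrical terms. Using $\hat{X}_{S,f}^{*}=\hat{X}_{S,f}$ and $\hat{\Psi}^{*}=\widehat{\Psi^{*}}$, and the fact that the product of two cylindrical functions is again cylindrical, each term $m_{\alpha}^{*}m_{\beta}$ acquires the form $\hat{X}^{\alpha}_{k_{\alpha}}\cdots\hat{X}^{\alpha}_{1}\,\widehat{\overline{\Psi_{\alpha}}\Psi_{\beta}}\,\hat{X}^{\beta}_{1}\cdots\hat{X}^{\beta}_{k_{\beta}}$, with a single cylindrical factor sandwiched between two flux strings.

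The key step is the evaluation of $\omega(m_{\alpha}^{*}m_{\beta})$. I would apply \eqref{positivity_eq_1} to move the cylindrical factor $\widehat{\overline{\Psi_{\alpha}}\Psi_{\beta}}$ to the front, the correction terms vanishing because they all involve $\omega(\widehat{X_{S,f}\Phi})=\mu_{\rm AL}(X_{S,f}\Phi)=0$, and then invoke the product property in \eqref{new-st}. Writing $\Lambda_{\alpha}:=\prod_{i}\omega(\hat{X}^{\alpha}_{i})\in\R$ (a product of real numbers, hence real), this yields
\[
\omega(m_{\alpha}^{*}m_{\beta})=\mu_{\rm AL}(\overline{\Psi_{\alpha}}\Psi_{\beta})\,\Lambda_{\alpha}\Lambda_{\beta}.
\]
Setting $d_{\alpha}:=c_{\alpha}\Lambda_{\alpha}$ and using the reality of the $\Lambda_{\alpha}$, I obtain
\[
\omega(a^{*}a)=\sum_{\alpha,\beta}\overline{d_{\alpha}}\,d_{\beta}\,\mu_{\rm AL}(\overline{\Psi_{\alpha}}\Psi_{\beta})=\Big\|\sum_{\alpha}d_{\alpha}\Psi_{\alpha}\Big\|^{2}_{L^2(\Abar,\,d\mu_{\rm AL})}\geq 0,
\]
since $\mu_{\rm AL}(\overline{\Psi_{\alpha}}\Psi_{\beta})$ is exactly the inner product of $\Psi_{\alpha},\Psi_{\beta}$ in $L^2(\Abar,d\mu_{\rm AL})$ and $\sum_{\alpha}d_{\alpha}\Psi_{\alpha}$ is again a cylindrical function, hence an element of that space. (Note that no explicit reduction of the $\Psi_{\alpha}$ to a common graph is needed: the $L^2$ structure of the Ashtekar--Lewandowski space subsumes it automatically.)

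To conclude that $\omega$ is a state I would add the normalization $\omega(\hat{1})=\mu_{\rm AL}(1)=1$, valid because $d\mu_{\rm AL}$ is normalized, and recall that the $*$-preserving property is precisely the assumed reality; positivity is thus the only substantial ingredient. I expect the main obstacle to be organizational rather than conceptual: one must track the commutator corrections produced by repeated use of \eqref{positivity_eq_1} and verify that they all land in the kernel of $\omega$ through $\mu_{\rm AL}(X_{S,f}\Psi)=0$. Once this bookkeeping is settled, the reduction of the sesquilinear form $\omega(a^{*}a)$ to a Gram matrix of cylindrical functions in $L^2(\Abar,d\mu_{\rm AL})$ renders nonnegativity automatic.
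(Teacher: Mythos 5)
Your proposal is correct and follows essentially the same route as the paper: reduce $\omega(a^{*}a)$ via the identity \eqref{positivity_eq_1} and the product property of \eqref{new-st} to the Gram form $\sum_{\alpha\beta}\overline{d_{\alpha}}d_{\beta}\,\mu_{\rm AL}(\overline{\Psi_{\alpha}}\Psi_{\beta})=\mu_{\rm AL}(\Phi^{*}\Phi)\geq 0$ with $\Phi=\sum_{\alpha}d_{\alpha}\Psi_{\alpha}$, the real factors $\Lambda_{\alpha}$ playing exactly the role of the paper's $\beta_{i}$. The only differences are notational (explicit coefficients $c_{\alpha}$ and the $L^{2}$-norm phrasing of the final nonnegativity).
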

\begin{proof}
 It follows from the previous lemma that $\mu=\mu_{\rm AL}$. Our task now is to prove that $\omega(a^{*}a)\geq 0$ for every element $a$ of the algebra. Such an element can be expressed as a finite sum of elements of the form \eqref{hfa-span}:
\begin{equation*}
 a=\sum_{i}\hat{\Psi}_{i}\hat{X}_{1i}\ldots\hat{X}_{k_i i}
\end{equation*}
(note that we again denoted flux operators without indicating faces and smearing functions explicitely). Thus we get:
\begin{multline*}
\omega(a^{*}a) = \omega((\sum_{j}\hat{X}^*_{k_j j}\ldots\hat{X}^*_{1j}\hat{\Psi}_{j}^{*})(\sum_{i}\hat{\Psi}_{i}\hat{X}_{1i}\ldots\hat{X}_{k_i i}))=
\\
=\omega(\sum_{ij}\hat{X}_{k_j j}\ldots\hat{X}_{1j}\hat{\Psi}^*_{j}\hat{\Psi}_{i}\hat{X}_{1i}\ldots\hat{X}_{k_i i})=\\
= \sum_{ij}\omega(\hat{\Psi}^{*}_{j}\hat{\Psi}_{i}\hat{X}_{k_j j}\ldots\hat{X}_{1 j}\hat{X}_{1i}\ldots\hat{X}_{k_i i})=
\\
= \sum_{ij}\mu_{\rm AL}(\hat{\Psi}^{*}_{j}\hat{\Psi}_{i})\omega(\hat{X}_{k_j j})\ldots\omega(\hat{X}_{1 j})\omega(\hat{X}_{1i})\ldots\omega(\hat{X}_{k_i i})
\end{multline*}
where we used \eqref{positivity_eq_1} and the last property of \eqref{new-st}. Denote
\[
\beta_i=\omega(\hat{X}_{1i})\ldots\omega(\hat{X}_{k_i i}).
\]
Because of reality of $\omega$ we have ${\beta}^*_i=\beta_i$. This property and assumed reality of $\omega$ gives us  
\begin{multline*}
\sum_{ij}\mu_{\rm AL}({\Psi}^{*}_{j}{\Psi_{i}})\omega(\hat{X}_{1j})\ldots\omega(\hat{X}_{k_j j})\omega(\hat{X}_{1i})\ldots\omega(\hat{X}_{k_i i})=\\=\sum_{ij}\mu_{\rm AL}({\Psi}^{*}_{j}{\Psi_{i}}){\beta}^*_j\beta_i=\sum_{ij}\mu_{\rm AL}({\beta^*_j}{\Psi}^{*}_{j}\beta_i{\Psi_{i}})=\mu_{\rm AL}({\Phi}^{*}{\Phi})\geq 0,
\end{multline*}
where ${\Phi}\equiv\sum_{i}\beta_{i}{\Psi}_{i}\in Cyl^\infty$.

Thus we proved positivity of $\omega$. Let us recall that the unit $\hat{1}$ of any holonomy-flux algebra is given by a constant cylindrical function of value equal to $1$. Since the Ashtekar-Lewandowski measure is normalized we have 
\[
\omega(\hat{1})=\mu_{\rm AL}(1)=1.
\] 
This together with positivity of $\omega$ means that $\omega$ is a state on the algebra.  
\end{proof}

\subsection{Modified algebras}

Consider the principle bundle $P=\R^D\times U(1)$ with $D\geq3$. We will treat $\R^D$ as a {\em real-analytic} manifold. Consequently, we will build the ACZ algebra from cylindrical functions and flux operators based on, respectively, graphs of {\em analytic} edges and {\em analytic} faces. 

Let us make precise the notion of ``a point removed from a face'' used in the description of the idea of constructing new nonstandard states:
\begin{df}
We say that $x\in\partial S:=\overline{S}\setminus S$ \footnote{It follows from the definition of $S$ as an embedded submanifold that $S$ coincides with its own interior with respect to the (induced) topology of $\overline{S}$.} is an isolated boundary point of $S$ if there exists a neighborhood $U$, open in $\R^D$, such that $\partial S\cap U=\{x\}$. 
\end{df}

The set of all isolated boundary points of $S$ will be denoted by $I_{\partial S}$. It is clear that every such a set has the ''diffeomorphic covariance'' property, that is, for any diffeomorphism $\varphi$ of $\R^D$:
\begin{equation}
I_{\partial\varphi(S)}=\varphi(I_{\partial S}).
\label{I_cov}
\end{equation}
In the sequel we will often make use of a completion $\tilde{S}$ defined as a sum of a face $S$ and its isolated boundary points:   
\begin{equation} \label{completion}
\tilde{S}:=S\cup I_{\partial S}.
\end{equation}
The completion $\tilde{S}$ may possess isolated boundary points\footnote{Let $S'$ be a face with no isolated boundary points and let $(x_n)_{n=1,2,3\ldots}$ be a sequence of pairwise distinct points of $S'$ convergent to $x_0\in S'$. Then $x_0$ is an isolated boundary point of the completion $\tilde{S}$ of a face $S:=S'\setminus\{\ x_n \ | \ n=0,1,2,3\ldots\ \}$.}.

Now we are able to describe precisely a class of smearing functions suitable for our construction.
\begin{df} \label{def_fun}
The smearing function $f$ on a face $S$, $f:S\rightarrow \mathbb{R}$ is allowable if there exists a compactly supported continuous function $\tilde{f}$ on $\tilde{S}$ such that $\tilde{f}$ restricted to $S$ coincides with $f$.
\end{df}

Now we are at a point to define the modified algebras:
\begin{df}
$(i)$ The modified ACZ algebra is an algebra of elementary classical observables constructed according to the description in Section \ref{sec-alg} from cylindrical functions based on graphs with analytic edges and flux operators based on analytic faces and smearing functions given by Definition \ref{def_fun}. $(ii)$ The $*$-algebra of quantum observables generated by the modified ACZ algebra will be called  modified holonomy-flux algebra and denoted by $\tilde{\Af}$.  
\end{df}

\subsection{Cone-like faces \label{cone-like}}

As described above new nonstandard states will assign nonzero values to flux operators based on cones and other similar faces. In the present subsection we shall give a precise definition of these faces and describe their properties.

\begin{df}
A generalized cone of type $d\geq 1$ (a $d$-cone) is an image under an analytic diffeomorphism of $\R^D$ of the set  
\begin{equation}
\{\ (x^1,\ldots,x^D)\in\R^D\ | \ (x^{1})^{2}+\ldots+(x^{D-1})^{2}-(x^{D})^{2d}=0 \ \ \text{and} \ \ 0 < x^{D} < H\ \}, 
\label{cone}
\end{equation}
where $(x^{1},\ldots,x^{D})$ are canonical coordinates on $\R^D$ and $H>0$.
\end{df} 

It is clear that the set of isolated boundary points of any $d$-cone consists of one point which will be called the {\em vertex} of the cone.

Every $d$-cone is an analytic embedded submanifold of $\R^D$. For every $d$-cone $C^d$ we can find an analytic embedded connected $(D-1)$-dimensional submanifold $C^d_M$  such that $(i)$ $C^d\subset C^d_M$ and $(ii)$ if $C^d_M$ is contained in an analytic embedded connected $(D-1)$-dimensional submanifold $C$ then $C^d_M=C$. We will call the manifold $C^d_M$ {\em the maximal analytic extension of} $C^d$. $C^d_M$ can be shown to be  an image of 
\begin{equation}
\{\ (x^1,\ldots,x^D)\in\R^D\ | \ (x^{1})^{2}+\ldots+(x^{D-1})^{2}-(x^{D})^{2d}=0 \ \ \text{and} \ \ 0 < x^{D}\ \}
\label{max}
\end{equation}
under an appropriate analytic diffeomorphism.

We do not guarantee that  a cone of type $d>1$ cannot be at the same time a cone of type $d'>1$ with $d'\neq d$ --- we cannot exclude that there exists an analytic diffeomorphism which maps a set \eqref{cone} with $d>1$ into a set \eqref{cone} with $d'>1$ distinct from $d$. But if a cone is of type $1$ then it cannot be a cone of type $d>1$ --- this is excluded by Lemma \ref{ineq}.

\subsubsection{Orientation of $d$-cones}

The following lemma states a property of $d$-cones, which is crucial for the construction  of nonstandard states: 
\begin{lm}
Let $C^{d}$ be a $d$-cone with a fixed orientation of its normal bundle. Then
every analytic diffeomorphism $\R^D$ which maps the cone onto itself preserves the orientation.
\label{orient}
\end{lm}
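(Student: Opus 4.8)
The plan is to reduce the claim to the model cone and then to distinguish the two possible co-orientations of the normal bundle by a \emph{first-order} (differentiable) invariant attached to the vertex, rather than by a purely topological one.

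First I would normalise the situation. Writing $C^d=\Phi(C_0)$, where $\Phi$ is the analytic diffeomorphism of $\R^D$ and $C_0$ is the standard set \eqref{cone}, and setting $\psi:=\Phi^{-1}\circ\varphi\circ\Phi$ for an analytic diffeomorphism $\varphi$ with $\varphi(C^d)=C^d$, one has $\psi(C_0)=C_0$; moreover $\psi$ preserves the orientation of the normal bundle of $C_0$ if and only if $\varphi$ preserves that of $C^d$, since $\Phi$ carries the one orientation to the other. Hence it suffices to treat the model cone $C_0$, whose vertex is the origin $0$. Next I would translate ``orientation of the normal bundle'' into a statement about the two sides of the cone. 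Since $D\ge 3$, the cross-sections of $C_0$ are spheres $S^{D-2}$, so $C_0$ is connected and a co-orientation amounts to the global choice of one of the two components of the complement of $C_0$ near the surface, namely the \emph{inside} $I=\{\,|x'|^2<(x^D)^{2d},\ x^D>0\,\}$ (with $x'=(x^1,\dots,x^{D-1})$) and the \emph{outside} $O$. Using that, near $0$, the set $\R^D\setminus C_0$ has exactly the two germs $[I]$ and $[O]$ (connectedness of $O$ again uses $D-1\ge 2$), the homeomorphism $\psi$ must permute $\{[I],[O]\}$, and it preserves the normal orientation precisely when $\psi([I])=[I]$. Because the vertex is the unique isolated boundary point of $C_0$, the covariance property \eqref{I_cov} forces $\psi(0)=0$, so the whole lemma comes down to excluding the case $\psi([I])=[O]$.

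The heart of the argument is the contingent (tangent) cone $T_0$ at the fixed vertex. Since $\psi$ is differentiable with invertible linear part $A:=d\psi_0$, tangent cones transform linearly, $T_0(\psi(E))=A\,(T_0 E)$ for every set $E$ with $0\in\overline E$; in particular, if $\psi$ swapped the sides then $A(T_0\overline I)=T_0\overline O$. I would then compute these two cones and check that no linear isomorphism can match them. For $d\ge 2$ the inside pinches onto the axis, so $T_0\overline I=\R_{\ge 0}\,e_D$ is a ray while $T_0\overline O\supseteq\{x^D\le 0\}$ is $D$-dimensional, and a ray cannot be carried onto a $D$-dimensional set. For $d=1$ both cones are $D$-dimensional, but $T_0\overline I=\{|x'|\le x^D\}$ is a convex cone whereas $T_0\overline O=\{|x'|\ge x^D\}$ is not, and a linear isomorphism preserves convexity. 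Either way $A(T_0\overline I)=T_0\overline O$ is impossible, so $\psi([I])=[I]$ and the orientation is preserved.

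The main obstacle, and the reason a naive approach fails, is that the two sides of the cone cannot be separated topologically: cutting a small ball around the vertex by $C_0$ yields two pieces each homeomorphic to a half-ball, so the link of the vertex carries no invariant distinguishing inside from outside. The genuine distinction is visible only at first order, which is exactly why the argument must exploit that $\varphi$ is a \emph{diffeomorphism} (providing the linear map $A$ at the fixed vertex, under which the contingent cones transform linearly) together with the analytic shape of $C^d$ near its vertex. A secondary point to handle with care is the justification that $\psi$ really permutes the germs $[I],[O]$ and fixes the vertex; both follow from $\psi(C_0)=C_0$, the covariance \eqref{I_cov}, and the connectedness statements above.
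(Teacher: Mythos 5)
Your proof is correct, and it takes a genuinely different route from the paper's. The paper first passes to the maximal analytic extension $C^d_M$ (this is where analyticity is used), so that the closure of the extended cone separates $\R^D$ globally into an inside $U_+$ and an outside $U_-$; it then argues by continuity along integral curves of the constant field $(0,\dots,0,-1)$ that an orientation-reversing $\varphi$ would have to send $-\vec e_D$ at the vertex to a vector with non-negative $D$-th component, contradicting Corollary \ref{comp}, which is itself built from the ``ingoing curve'' machinery (Corollary \ref{in+}, Lemma \ref{dk>0}) with a separate decomposition of $\vec e_D$ into two generator directions when $d=1$. You instead stay entirely local at the vertex: after fixing the vertex via \eqref{I_cov} you encode the co-orientation as the germ of a side ($[I]$ versus $[O]$) and rule out a swap by comparing the contingent cones $T_0\overline I$ and $T_0\overline O$, which no linear isomorphism $d\psi_0$ can interchange (a ray versus a $D$-dimensional set for $d\ge 2$; a convex versus a non-convex cone for $d=1$). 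This avoids the maximal extension and the global separation of $\R^D$ altogether, uses only that $\varphi$ is $C^1$ at the vertex rather than analytic, and as a bonus the same computation applied to the surface itself ($T_0\overline{C^d}$ is a ray for $d>1$ but a $(D-1)$-dimensional cone for $d=1$) essentially reproves Lemma \ref{ineq}; what the paper's route buys in exchange is that its elementary curve lemmas are reused verbatim in the appendix. Two points you should spell out, both routine: the identity $T_0(\psi(E))=d\psi_0(T_0E)$, and the fact that a side-swap yields the inclusion $T_0\overline O\subseteq d\psi_0(T_0\overline I)$ as well as its converse --- this is the direction actually needed for the $d\ge2$ contradiction, and it rests on $\psi(I\cap B)$ containing a full one-sided neighbourhood $O\cap B''$ of the vertex, which is exactly where your germ formulation does real work.
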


To give a proof of the lemma we need some preparations. Consider a $C^1$-class curve 
\begin{equation}
]-a,a[\ni\tau\mapsto\kappa(\tau)=(x^1(\tau),\ldots,x^D(\tau))\in\R^D,\ \ a>0.
\label{kappa-cor}
\end{equation}
We will say that the curve $\kappa$ is ``ingoing'' to a cone $C^d$ through its vertex $v$ if 
\begin{enumerate}
\item $\kappa(]0,a[)\subset C^d$,
\item $\kappa(0)$ coincides with the vertex $v$ of $C^d$. 
\end{enumerate}
We will denote by $\dot{\kappa}$ the vector tangent to $\kappa$ at $v$. The $D$-th component of $\dot{\kappa}$, $\dot{\kappa}^D$,  refers to the decomposition of the vector in the basis given by the canonical coordinates $(x^1,\ldots,x^D)$ on $\R^D$ and is given by
\[
\dot{\kappa}^D=\frac{dx^D(\tau)}{d\tau}\Big|_{\tau=0}.
\]
Given a diffeomorphism $\varphi$ on $\R^D$ preserving the vertex $v$, $\varphi'$ will denote the tangent map from $T_v\R^D$ onto itself given by the diffeomorphism.     

Consider now the cone $C^d$ given by \eqref{cone} and a curve $\kappa$ ingoing to the cone through its vertex $v=(0,\ldots,0)$. Then
\begin{equation}
\dot{\kappa}^D=\lim_{\tau\to 0^+}\frac{x^D(\tau)-x^D(0)}{\tau}=\lim_{\tau\to 0^+}\frac{x^D(\tau)}{\tau}\geq 0,
\label{Dth}
\end{equation}
because $x^D(\tau)>0$ for $\tau>0$, which follows from the fact that $\kappa$ is ingoing and from the definition \eqref{cone} of $C^d$.
     
Suppose now that a diffeomorphism $\varphi$ preserves the cone $C^d$. Consequently, $\varphi$ maps the vertex $v$ onto itself. It also maps any ingoing curve to ingoing one, hence the property \eqref{Dth} is preserved by the diffeomorphism. This conclusion can be stated more precisely as follows:
\begin{cor}
Let $C^d$ be given by \eqref{cone} and let $\kappa$ be a curve ingoing to  $C^d$ through its vertex $v$. If $\varphi$ is a diffeomorphism preserving the cone then
\[
(\varphi'\dot{\kappa})^D\geq 0.
\]          
\label{in+}
\end{cor}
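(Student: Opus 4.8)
The plan is to reduce the claim directly to the inequality \eqref{Dth}, which has already been established for an arbitrary curve ingoing to $C^d$ through its vertex, by pushing the curve $\kappa$ forward under $\varphi$. The one conceptual ingredient I need beyond \eqref{Dth} is that $\varphi$ fixes the vertex: since $\varphi$ maps $C^d$ onto itself and the vertex $v$ is the unique isolated boundary point of $C^d$, the diffeomorphic covariance property \eqref{I_cov} applied to $I_{\partial C^d}=\{v\}$ forces $\varphi(v)=v$. For the standard cone \eqref{cone} this means $\varphi(0,\ldots,0)=(0,\ldots,0)$.

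Next I would introduce the composed curve $\lambda:=\varphi\circ\kappa$ defined on the same interval $]-a,a[$. Because $\varphi$ is an analytic (in particular $C^1$) diffeomorphism and $\kappa$ is of $C^1$-class, $\lambda$ is again a $C^1$-class curve. I claim $\lambda$ is ingoing to $C^d$ through $v$ in the sense defined above. Indeed, the first condition holds because
\[
\lambda(]0,a[)=\varphi\bigl(\kappa(]0,a[)\bigr)\subset\varphi(C^d)=C^d,
\]
using $\kappa(]0,a[)\subset C^d$ and the fact that $\varphi$ preserves the cone; and the second condition holds because $\lambda(0)=\varphi(\kappa(0))=\varphi(v)=v$.

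It then remains to identify the relevant component of the tangent vector of $\lambda$. By the chain rule, the vector tangent to $\lambda$ at $\tau=0$ is $\dot{\lambda}=\varphi'\dot{\kappa}$, where $\varphi'$ is the tangent map of $\varphi$ at $v$; in particular $\dot{\lambda}^D=(\varphi'\dot{\kappa})^D$. Applying the inequality \eqref{Dth}—valid for every curve ingoing to $C^d$ through $v$—to the ingoing curve $\lambda$ yields $\dot{\lambda}^D\geq 0$, that is, $(\varphi'\dot{\kappa})^D\geq 0$, which is exactly the assertion.

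There is no genuinely hard analytic step here; the entire content is the verification that $\lambda=\varphi\circ\kappa$ really satisfies the two defining conditions of an ingoing curve. The point deserving care is therefore the bookkeeping in that verification—namely that $\varphi(v)=v$ and $\varphi(C^d)=C^d$ together guarantee both conditions—together with checking that $\lambda$ remains a $C^1$ curve on the whole interval so that \eqref{Dth} applies to it verbatim.
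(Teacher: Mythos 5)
Your proof is correct and follows essentially the same route as the paper: push $\kappa$ forward to the ingoing curve $\varphi\circ\kappa$, note that $\varphi$ fixes the vertex and preserves the cone, and apply \eqref{Dth} to the pushed-forward curve via the chain rule. The paper merely asserts these steps more tersely (it states that $\varphi$ maps the vertex to itself and ingoing curves to ingoing curves without invoking \eqref{I_cov} explicitly), so your write-up is just a more detailed version of the same argument.
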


In fact, a bit stronger statement holds:
\begin{lm} \label{lm_outgoing}
Let $C^d$ be given by \eqref{cone} and let $\kappa$ be a curve ingoing to  $C^d$ through its vertex $v$ such that $\dot{\kappa}^D>0$. If $\varphi$ is a diffeomorphism preserving the cone then
\[
(\varphi'\dot{\kappa})^D>0.
\]
\label{dk>0}
\end{lm}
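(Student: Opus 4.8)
The plan is to strengthen Corollary \ref{in+} from a weak to a strict inequality by exhibiting an explicit ingoing curve whose image has a strictly positive $D$-th component, and then transferring this strictness back to $\kappa$ via linearity of the tangent map $\varphi'$. First I would fix the cone $C^d$ in the canonical form \eqref{cone} (the general case follows by conjugating with the defining analytic diffeomorphism, since tangent maps compose and the $D$-th coordinate condition is coordinate-adapted). The key observation is that the straight vertical ray $\kappa_0(\tau) = (0,\ldots,0,\tau)$ for $\tau \in \ ]0,a[$ is ingoing to $C^d$ through the vertex (it satisfies $(x^1)^2+\cdots+(x^{D-1})^2 - (x^D)^{2d} = 0$ trivially and has $x^D = \tau > 0$), and its tangent vector is $\dot{\kappa}_0 = (0,\ldots,0,1)$, so $\dot{\kappa}_0^D = 1 > 0$.

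Next I would apply Corollary \ref{in+} to \emph{both} $\kappa_0$ and to the reflected-type curve to pin down the sign of $(\varphi'\dot{\kappa}_0)^D$. The heart of the argument is this: I want to show $(\varphi'\dot{\kappa})^D > 0$ for the given $\kappa$ with $\dot{\kappa}^D > 0$, ruling out the boundary case $(\varphi'\dot{\kappa})^D = 0$. Suppose for contradiction that $(\varphi'\dot{\kappa})^D = 0$. I would consider a one-parameter family of ingoing vectors interpolating between $\dot{\kappa}$ and $\dot{\kappa}_0$, or more directly exploit that $\varphi'$ is a linear isomorphism of $T_v\R^D$. The idea is that the set of tangent vectors of ingoing curves forms a cone (in the tangent space) with nonempty interior in the region $\dot{\kappa}^D > 0$; since Corollary \ref{in+} forces $\varphi'$ to map every such ingoing vector into the closed half-space $\{w^D \geq 0\}$, and $\varphi'$ is invertible, $\varphi'$ must map the interior into the open half-space. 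Concretely, if $(\varphi'\dot{\kappa})^D = 0$ then $\varphi'\dot{\kappa}$ lies on the boundary hyperplane $\{w^D = 0\}$; perturbing $\dot{\kappa}$ slightly within the ingoing cone in a direction that decreases the $D$-th component of its $\varphi'$-image (possible because $\varphi'$ is surjective, so some ingoing perturbation maps to a vector with negative $D$-th component) would produce an ingoing curve violating Corollary \ref{in+}.

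I expect the main obstacle to be making the perturbation argument rigorous: I must ensure that the perturbed tangent vector is genuinely realized by an ingoing $C^1$ curve, not merely an abstract vector in the tangent space. This requires knowing which tangent directions at the vertex are tangent to ingoing curves. For the canonical cone, the vertical axis direction $(0,\ldots,0,1)$ is tangent to the ray $\kappa_0$, and by choosing curves that approach the vertex tangentially to various generators of the cone one can realize an open set of directions in the tangent space; the condition $\dot{\kappa}^D > 0$ selects the directions pointing strictly into the upper half. I would verify that any vector $w$ with $w^D > 0$ that also satisfies the infinitesimal cone condition is tangent to some ingoing curve, so that the realizable ingoing directions with $\dot{\kappa}^D > 0$ form an open subset of a cone, giving the room needed for the perturbation. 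Once this realizability is secured, invertibility of $\varphi'$ and Corollary \ref{in+} close the argument by contradiction.

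Alternatively, and perhaps more cleanly, I would avoid perturbations entirely: apply Corollary \ref{in+} to $\kappa$ (giving $(\varphi'\dot{\kappa})^D \geq 0$) and separately argue that equality is impossible because $\varphi$ also preserves the maximal analytic extension $C^d_M$ and hence the reflected curve through the vertex into the lower sheet (with $x^D < 0$) would also have to be mapped consistently, but the composition $\varphi^{-1}$ likewise preserves the cone and satisfies Corollary \ref{in+}, forcing $\varphi'$ to map the boundary hyperplane $\{w^D = 0\}$ into itself; then $(\varphi'\dot{\kappa})^D = 0$ with $\dot{\kappa}^D > 0$ would force $\varphi'$ to collapse the strictly-positive half-space onto the boundary, contradicting that $\varphi'$ is a bijection preserving the decomposition. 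I would present whichever of these two routes (local realizability of directions versus the $\varphi^{-1}$ symmetry) yields the shorter rigorous argument, with the $\varphi^{-1}$ symmetry being the more elegant if the details cooperate.
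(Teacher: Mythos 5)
Neither of your two routes closes the gap, and both stumble on the same point: you never determine which vectors are actually realizable as tangents of ingoing curves, and for $d>1$ the answer kills the perturbation argument outright. As shown in the paper's Appendix B, every curve ingoing to $C^d$ with $d>1$ whose tangent has $\dot{\kappa}^D>0$ satisfies $\dot{\kappa}=(0,\ldots,0,\dot{\kappa}^D)$: the realizable directions with positive $D$-th component span a single line, so there is no ``open subset of a cone'' and no room to perturb at all. Even for $d=1$ the realizable directions form the codimension-one surface $\{\,w\ |\ |w'|=w^D\,\}$ with $w'=(w^1,\ldots,w^{D-1})$, and the perturbation you need---one whose $\varphi'$-image has strictly negative $D$-th component---must be tangent to that surface at $\dot{\kappa}$; you give no argument that the relevant open half-space meets that tangent hyperplane, and to first order it may not. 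Your second route is weaker still: Corollary \ref{in+} constrains $\varphi'$ only on tangents of ingoing curves, so nothing forces $\varphi'$ to map the hyperplane $\{w^D=0\}$ into itself; the cone \eqref{cone} and its maximal extension \eqref{max} are both confined to $x^D>0$, so the ``reflected curve into the lower sheet'' does not exist; and a single vector $\dot{\kappa}$ with $(\varphi'\dot{\kappa})^D=0$ does not ``collapse the half-space''---linearity implies no such thing.

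The missing idea is the rotational symmetry of the cone about the $x^D$-axis. Suppose $(\varphi'\dot{\kappa})^D=0$, so that $\varphi'\dot{\kappa}$ lies in the hyperplane $x^D=0$. Since $D\geq 3$, there is a rotation $R$ fixing the $x^D$-axis which maps $\varphi'\dot{\kappa}$ to $-\varphi'\dot{\kappa}$ (rotate by $\pi$ in a $2$-plane inside $\{x^D=0\}$ containing $\varphi'\dot{\kappa}$); such an $R$ preserves $(x^1)^2+\ldots+(x^{D-1})^2$ and $x^D$, hence preserves $C^d$. Then $\phi:=\varphi^{-1}\circ R\circ\varphi$ preserves the cone and $\phi'\dot{\kappa}=-\dot{\kappa}$, whence $(\phi'\dot{\kappa})^D=-\dot{\kappa}^D<0$, contradicting Corollary \ref{in+} applied to $\phi$ and the ingoing curve $\kappa$. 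This is the paper's proof: it needs only the weak inequality of Corollary \ref{in+} together with the symmetry group of the cone, and entirely avoids the question of which tangent directions are realizable.
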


\begin{proof}
Consider an ingoing curve $\kappa$. Assume that $\dot{\kappa}^D>0$ and $(\varphi'\dot{\kappa})^D=0$. This means that the vector $\varphi'\dot{\kappa}$ is tangent to the hyperplane $x^D=0$. Hence we can use a rotation $R$ around the $x^D$-axis to map $\varphi'\dot{\kappa}$ into $-\varphi'\dot{\kappa}$. Then we can use $\varphi^{\prime -1}$ to map  $-\varphi'\dot{\kappa}$ into $-\dot{\kappa}$. Since the rotation $R$ is an analytic diffeomorphism preserving the cone $C^d$ we conclude that the diffeomorphism $\phi:=\varphi^{-1}\circ R\circ\varphi$ preserves the cone and the corresponding tangent map $\phi'$ maps $\dot{\kappa}$ into $-\dot{\kappa}$. We arrive at the conclusion that $(\phi'\dot{\kappa})^D<0$. This however contradicts Corollary \ref{in+}. 
\end{proof}

Consider now the vector $\vec{e}_D=(0,\ldots,0,1)$ based at the vertex $v$ of the cone $C^d$ (given by \eqref{cone}) and a diffeomorphism $\varphi$ which preserves the cone. 

If $d>1$ then there exist a curve $\kappa$ ingoing to the cone through its vertex such that $\dot{\kappa}=\vec{e}_D$. By virtue of Lemma \ref{dk>0} for any diffeomorphism $\varphi$ preserving the cone $C^d$
\begin{equation}
(\varphi' \vec{e}_D)^D>0.
\label{e_D>0}
\end{equation}

Assume now that $d=1$. Then 
\[
\vec{e}_D=\Big(\frac{1}{2},0,\ldots,0,\frac{1}{2}\Big)+\Big(-\frac{1}{2},0,\ldots,0,\frac{1}{2}\Big).
\]  
It is easy to find two curves ingoing to the cone such that they generate the two vectors at the r.h.s. of the equation above. Again by virtue of Lemma \ref{dk>0} for any diffeomorphism $\varphi$ preserving the cone $C^1$
\[
\Big(\varphi'\Big(\pm\frac{1}{2},0,\ldots,0,\frac{1}{2}\Big)\Big)^D>0
\]
and consequently \eqref{e_D>0} holds also in the case $d=1$.
 
Thus we arrived at the following conclusion:
\begin{cor}
Assume that a diffeomorphism $\varphi$ preserves the cone $C^d$ given by \eqref{cone} and  $\vec{e}_D=(0,\ldots,0,1)$ is based at the vertex of $C^d$. Then  the $D$-th component of $\varphi'\vec{e}_D$ is positive.     
\label{comp}
\end{cor}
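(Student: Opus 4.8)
The plan is to obtain the claim directly from Lemma \ref{dk>0}, whose hypothesis requires the vector in question to be the velocity at the vertex of a curve ingoing to $C^d$ with strictly positive $D$-th component. Since $\varphi$ preserves $C^d$ it fixes the vertex $v$, so $\varphi'$ is a well-defined tangent map of $T_v\R^D$ onto itself; the task is then to express $\vec{e}_D$ in terms of such ingoing velocities so that Lemma \ref{dk>0} becomes applicable. The way $\vec{e}_D$ must be handled depends qualitatively on whether $d>1$ or $d=1$, so I would split the proof accordingly.

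First I would describe $C^d$ near the vertex in the coordinates of \eqref{cone}. Putting $r=\sqrt{(x^1)^2+\ldots+(x^{D-1})^2}$, the cone satisfies $r=(x^D)^d$, so along any ingoing curve $r/x^D=(x^D)^{d-1}\to 0$ as $x^D\to0^+$ exactly when $d>1$. Thus for $d>1$ the cone is tangent to the $x^D$-axis at $v$ and $\vec{e}_D$ is itself an admissible ingoing velocity, whereas for $d=1$ every ingoing velocity has a nonzero radial part and $\vec{e}_D$ is not tangent to the cone.

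In the case $d>1$ I would exhibit the explicit ingoing curve $\tau\mapsto(\tau^d,0,\ldots,0,\tau)$ on $]0,a[$: it lies on $C^d$ because $(\tau^d)^2-(\tau)^{2d}=0$, it reaches $v$ as $\tau\to0^+$, and its velocity at $v$ is $(0,\ldots,0,1)=\vec{e}_D$ with $\dot{\kappa}^D=1>0$. Lemma \ref{dk>0} then gives $(\varphi'\vec{e}_D)^D>0$ immediately. In the case $d=1$ I would use the decomposition $\vec{e}_D=(\tfrac{1}{2},0,\ldots,0,\tfrac{1}{2})+(-\tfrac{1}{2},0,\ldots,0,\tfrac{1}{2})$, realise each summand as the velocity of an ingoing curve (for instance $\tau\mapsto(\pm\tfrac{1}{2}\tau,0,\ldots,0,\tfrac{1}{2}\tau)$, each lying on $C^1$ with $\dot{\kappa}^D=\tfrac{1}{2}>0$), apply Lemma \ref{dk>0} to each summand to see that both $\bigl(\varphi'(\pm\tfrac{1}{2},0,\ldots,0,\tfrac{1}{2})\bigr)^D$ are positive, and finally use linearity of $\varphi'$ together with additivity of the $D$-th component to conclude $(\varphi'\vec{e}_D)^D>0$.

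The step I expect to be the genuine obstacle is the $d=1$ case: since no ingoing curve has velocity $\vec{e}_D$, Lemma \ref{dk>0} cannot be applied to $\vec{e}_D$ itself, so one must both check that the chosen decomposition consists of honest ingoing velocities and justify transporting the positivity conclusion through $\varphi'$, which is legitimate only because $\varphi'$ is linear. The $d>1$ case is essentially immediate once one notices the tangency of $C^d$ to the $x^D$-axis at the vertex.
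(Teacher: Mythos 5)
Your proof is correct and takes essentially the same route as the paper: for $d>1$ the vector $\vec{e}_D$ is realised directly as the velocity of an ingoing curve and Lemma \ref{dk>0} applies at once, while for $d=1$ one uses the decomposition $\vec{e}_D=\bigl(\tfrac{1}{2},0,\ldots,0,\tfrac{1}{2}\bigr)+\bigl(-\tfrac{1}{2},0,\ldots,0,\tfrac{1}{2}\bigr)$, applies Lemma \ref{dk>0} to each summand, and concludes by linearity of $\varphi'$. The only difference is that you exhibit the explicit ingoing curves $\tau\mapsto(\tau^d,0,\ldots,0,\tau)$ and $\tau\mapsto(\pm\tfrac{1}{2}\tau,0,\ldots,0,\tfrac{1}{2}\tau)$, which the paper merely asserts to exist.
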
         

\begin{proof}[Proof of Lemma \ref{orient}]
It is not difficult to realize that since the lemma concerns analytic diffeomorphisms its assertion holds if and only if it holds true for $C^d_M$ being the maximal analytic extension of the cone $C^d$ given by \eqref{cone}. This extension is unbounded and its closure divides the space $\mathbb{R}^{D}$ in two disjoint open subsets: the ''internal'' one denoted by $U_+$ (the $x^D$ coordinate of every point in $U_+$ is positive) and the ''external'' one $U_-$. Obviously,  
\begin{equation*}
 \mathbb{R}^{D}=U_+\cup U_-\cup C^d_M\cup \{ v \},
\end{equation*} 
where $v$ is the vertex of $C^d_M$.
 
Consider now the constant vector field $Y=(0,\ldots,0,-1)$ on $\R^D$. This vector field restricted to $C^d_M$ defines an orientation of the cone. Denote by $\chi^0$ the integral curve of $Y$ which runs through the vertex $v$---$\chi^0$ coincides with the $x^D$-axis with the opposite orientation. For every integral curve $\chi$ of the vector field we define 
\[
 \chi_{\pm}:=\chi\cap U_{\pm}.
\]

Let $\varphi$ be an analytic diffeomorphism preserving $C^d_M$. Suppose that $\varphi$ changes the orientation of the cone. Then for arbitrary integral curve $\chi\neq \chi^0$ of the vector field $Y$  
\begin{equation}
\begin{aligned}
 & \varphi(\chi_\pm)\cap U_{\pm}=\varnothing \\
&\varphi(\chi_\pm)\cap U_{\mp}=\varphi(\chi_\pm).
\end{aligned}
\label{chi-U}
\end{equation}
But because of continuity of both the diffeomorphism and the vector field the above statements have to be true also for $\chi^0$. 

Note now that the vector tangent to $\chi^0$ at the vertex $v$ is $-\vec{e}_D$. Since $\varphi(\chi^0)$ satisfies \eqref{chi-U} the $D$-th component of 
\[
\varphi'(-\vec{e}_D)=-\varphi'\vec{e}_D
\]       
has to be {\em non-negative}. This however contradicts Corollary \ref{comp} which implies that every $\varphi$ preserving $C^d_M$ maps the vector $-\vec{e}_D$ into one of a {\em negative} $D$-th component. 

Thus we conclude that if $\varphi$ preserves $C^d_M$ then it cannot change its orientation.       
\end{proof}

Among faces used to define flux operators are those which are subsets of $d$-cones. It is now clear that Lemma \ref{orient} implies the following conclusion:
\begin{cor}
If a face $S$ is a subset of a $d$-cone then every analytic diffeomorphism of $\R^D$ which preserves $S$ preserves also the orientation of $S$.
\label{cor-or}     
\end{cor}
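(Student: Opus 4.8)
The plan is to deduce the corollary from Lemma \ref{orient} by showing that an analytic diffeomorphism which merely preserves the face $S$ must in fact preserve the entire maximal analytic extension $C^d_M$ of the underlying cone; once this is known the orientation claim follows at once. So let $S\subset C^d$ be a face contained in a $d$-cone, let $C^d_M$ be the maximal analytic extension of $C^d$, and let $\varphi$ be an analytic diffeomorphism of $\R^D$ with $\varphi(S)=S$. Since $S$ and $C^d_M$ are both $(D-1)$-dimensional and $S\subset C^d\subset C^d_M$, the face $S$ is an open subset of $C^d_M$. (If $C^d$ is only the image under an analytic diffeomorphism $\Phi$ of the standard cone \eqref{cone}, one first conjugates $\varphi$ by $\Phi$; as orientation preservation is invariant under such conjugation this reduces the problem to the standard case.)

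The key step is to prove $\varphi(C^d_M)=C^d_M$. First, $\varphi(C^d_M)$ is again a connected analytic $(D-1)$-dimensional submanifold, and it is maximal in the sense of the defining property of $C^d_M$: if $\varphi(C^d_M)\subset C$ for some connected analytic $(D-1)$-dimensional submanifold $C$, then $C^d_M\subset\varphi^{-1}(C)$, and maximality of $C^d_M$ forces $C^d_M=\varphi^{-1}(C)$, whence $\varphi(C^d_M)=C$. Next, both $C^d_M$ and $\varphi(C^d_M)$ contain the common open piece $\varphi(S)=S$. Two connected analytic $(D-1)$-dimensional submanifolds sharing an open subset agree near it to all orders, so by the identity theorem for real-analytic maps their analytic continuations coincide; since both submanifolds are already maximal, this yields $\varphi(C^d_M)=C^d_M$. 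Hence $\varphi$ preserves the maximal extension of the cone.

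It remains to transfer the orientation statement from $C^d_M$ to $S$. The orientation of the normal bundle of $S$ is, by the definition of a face, a fixed datum; since $S$ is connected and open in $C^d_M$, and the normal bundle of $C^d_M$ is orientable (as witnessed by the transversal field $Y=(0,\ldots,0,-1)$ used in proving Lemma \ref{orient}), this orientation is the restriction to $S$ of one of the two global orientations of the normal bundle of $C^d_M$. The argument establishing Lemma \ref{orient} now applies verbatim to $C^d_M$, which $\varphi$ preserves, and shows that $\varphi$ preserves that global orientation; restricting back to $S$ gives the assertion.

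The step I expect to be the crux is the middle one, where analyticity is indispensable. A merely smooth diffeomorphism could map $S$ onto itself while bending $C^d_M$ away from itself outside $\overline{S}$, and then Lemma \ref{orient} would be inapplicable. The entire content of the corollary therefore rests on the rigidity furnished by the identity theorem, which propagates the coincidence of the two submanifolds from the open overlap $S$ to their full maximal extent.
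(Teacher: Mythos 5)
Your proof is correct and follows the route the paper intends: the paper asserts the corollary as an immediate consequence of Lemma \ref{orient}, leaving implicit precisely the reduction you carry out, namely that an analytic diffeomorphism preserving the open piece $S\subset C^d_M$ must preserve the whole maximal analytic extension, after which the orientation argument of Lemma \ref{orient} (whose proof already works at the level of $C^d_M$) restricts back to $S$. The one step stated a bit tersely --- that two maximal connected analytic hypersurfaces sharing an open subset coincide, most cleanly seen by noting both lie in the zero set of $(x^1)^2+\cdots+(x^{D-1})^2-(x^D)^{2d}$ and avoid the vertex --- is the same analytic-continuation rigidity the paper itself invokes in Corollary \ref{cc-dd'} and Lemma \ref{lm-dense}, so no genuine gap remains.
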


Now let us distinguish one of the orientations of a $d$-cone $C^{d}$ as \emph{external}: we mean that a $d$-cone face $C^{d}$ is externally oriented if the "up" side of $C^{d}$ is "pointed out" by its vertex. More precisely: the external orientation of the $d$-cone \eqref{cone} is that given by the constant vector field $Y=(0,\ldots,0,-1)$ on $\R^D$ restricted to the cone. The externally oriented $C^d$ is the image under a diffeomorphism $\varphi$ of the externally oriented $d$-cone \eqref{cone}.

\subsubsection{Some auxiliary facts}

In this subsection  we will present and justify some auxiliary statements of a rather technical character. We will also introduce a notion of almost equal flux operators. All these  will be used in the next section to state and prove  the main theorem of the paper.

\begin{lm}
Let a face $S$ be a subset of a $d$-cone $C^d$. Then the set $I_{\partial S}\cap C^d$ is nowhere dense in $C^d$. 
\label{Ips-nd}
\end{lm}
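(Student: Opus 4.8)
The plan is to show that the set $A:=I_{\partial S}\cap C^d$ is a \emph{discrete} subspace of the manifold $C^d$, and then to invoke the general topological fact that a discrete subspace of a space with no isolated points is nowhere dense. First I would verify discreteness. Take any $x\in A$. Since $x\in I_{\partial S}$, by the definition of an isolated boundary point there is a set $U$ open in $\R^D$ with $\partial S\cap U=\{x\}$. Because $I_{\partial S}\subseteq\partial S$ we have $A\subseteq\partial S$, so $A\cap U\subseteq\partial S\cap U=\{x\}$; as $x\in A\cap U$ this forces $A\cap U=\{x\}$. Intersecting with $C^d$, the set $U\cap C^d$ is open in $C^d$ and meets $A$ only in $x$. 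Hence every point of $A$ is isolated in $A$ with respect to the topology induced from $C^d$, i.e. $A$ is discrete.

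Second, I would establish the auxiliary fact: if $M$ is a topological space without isolated points and $E\subseteq M$ is discrete, then $E$ is nowhere dense in $M$. Suppose not; then the closure $\overline{E}$ (taken in $M$) contains a nonempty open set $V$. Since $E$ is dense in $\overline{E}$, it meets every nonempty open subset of $V$. Choose $x\in E\cap V$ together with an open $O\ni x$ satisfying $O\cap E=\{x\}$. The set $W:=(O\cap V)\setminus\{x\}$ is open and contained in $V$, and it is nonempty because $M$ has no isolated points (so the nonempty open set $O\cap V$ cannot reduce to the single point $x$). Density then forces $W\cap E\neq\varnothing$, yet $W\cap E\subseteq(O\cap E)\setminus\{x\}=\varnothing$, a contradiction.

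Finally I apply this lemma with $M=C^d$. Since $C^d$ is a $(D-1)$-dimensional analytic embedded submanifold of $\R^D$ with $D\geq3$, it is locally homeomorphic to $\R^{D-1}$ and therefore has no isolated points. The discreteness established in the first step then yields that $A=I_{\partial S}\cap C^d$ is nowhere dense in $C^d$, which is the assertion.

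The computation is short; the point needing care is the auxiliary lemma, specifically the use that $C^d$ has no isolated points in order to guarantee $(O\cap V)\setminus\{x\}\neq\varnothing$. It is also worth stressing that $A$ need \emph{not} be closed: as the footnote following \eqref{completion} shows, accumulation points of $I_{\partial S}$ may fail to be isolated boundary points, so one cannot shortcut the argument by claiming ``$A$ is closed and discrete, hence nowhere dense.'' The density-in-its-closure argument above is precisely what circumvents this, since it only uses that $A$ is dense in $\overline{A}$ and that each of its points is isolated.
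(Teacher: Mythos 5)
Your proof is correct and follows essentially the same route as the paper's: both rest on the two facts that each isolated boundary point has a neighbourhood in $\R^D$ meeting $\partial S$ only in itself (your discreteness step, the paper's sets $V_x$) and that singletons are not open in the embedded submanifold $C^d$. You merely factor the argument through the general statement that a discrete subspace of a ($T_1$) space without isolated points is nowhere dense, whereas the paper verifies the nowhere-density criterion directly; the mathematical content is identical.
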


\begin{proof}
Let $x\in I_{\partial S}\cap C^d$. We know that there exists a nonempty set $U_x$ open in $\R^D$ such that $U_x\cap \partial S=\{x\}$. Hence $U_x\cap I_{\partial S}=\{x\}$ and $C^d\cap U_x\cap I_{\partial S}=\{x\}$. But $C^d\cap U_x\equiv V_x$ is nonempty and open in $C^d$. Then
\[
V:=\bigcup_{x\in I_{\partial S}\cap C^d}V_x
\]
is nonempty and open in $C^d$.

Consider now a nonempty set $A\subset C^d$ open in $C^d$. Assume that $A\cap V$ is nonempty. Then there exists $x\in I_{\partial S}\cap C^d$ such that $A\cap V_x$ is nonempty. It is excluded that $A\cap V_x=\{x\}$ since $\{x\}$ is not open in $C^d$ which follows from the fact that $C^d$ is an embedded submanifold of $\R^D$. Hence $A\cap(V_x\setminus\{x\})$ is nonempty and open and does not contain any points of $I_{\partial S}\cap C^d$.

If $A\cap V$ is empty then $A$ again does not contain any points of $I_{\partial S}\cap C^d$.  

We just proven that every nonempty open subset $A$ of $C^d$ contains a nonempty open subset $B$ such that $B\cap (I_{\partial S}\cap C^d)=\varnothing$. This means that $I_{\partial S}\cap C^d$ is nowhere dense subset of $C^d$ \cite{top}.
\end{proof}

\begin{lm} \label{lem-X-ext}
Let $X_{S,f}\in\tilde{\Af}$ be a flux operator based on a face $S$ being a subset of a $d$-cone. Then there exists a $d$-cone $C^d$ and a unique allowable smearing function $f':C^d\mapsto\R$ such that $S\subset C^d$ and
\begin{equation}
f'(x)= 
\begin{cases}
f(x) & \text{if $x\in S$}\\
0 &  \text{if $x\in C^d\setminus\tilde{S}$}
\end{cases}.
\label{f'}
\end{equation}
\label{X-ext}     
\end{lm}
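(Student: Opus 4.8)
The plan is to take for $C^d$ a sufficiently large truncation of the maximal analytic extension $C^d_M$ of the $d$-cone containing $S$, and to build $f'$ out of the (unique) continuous extension $\tilde f$ of $f$ to $\tilde S$ supplied by allowability (Definition \ref{def_fun}). Since $S$ is a face, $\overline S$ is compact and contained in $\overline{C^d_M}=C^d_M\cup\{v\}$, where $v$ is the vertex; hence $\overline S\setminus\{v\}$ lies in a bounded part of $C^d_M$, and choosing the truncation parameter $H$ large enough yields a $d$-cone $C^d$ with $S\subset C^d$ and $\overline S\subset\tilde C^d:=C^d\cup\{v\}$. I would then set
\[
f'(x):=\begin{cases}\tilde f(x) & x\in\tilde S\cap C^d,\\ 0 & x\in C^d\setminus\tilde S,\end{cases}
\]
which equals $f$ on $S$ and $0$ on $C^d\setminus\tilde S$ as required; the only freedom, the values on $I_{\partial S}\cap C^d$, is fixed to $\tilde f$. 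It then remains to verify that $f'$ is continuous on $C^d$, that it extends to a compactly supported continuous function on $\tilde C^d$, and that it is the unique allowable function with the stated properties.

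Continuity of $f'$ I would check pointwise. On the open set $S$ (recall $S$ is open in $C^d$, being a $(D-1)$-dimensional submanifold of the $(D-1)$-manifold $C^d$) we have $f'=\tilde f$, and on the open set $C^d\setminus\overline S$ we have $f'=0$. At a non-isolated boundary point $x_0\in C^d\cap(\partial S\setminus I_{\partial S})$ continuity with value $0$ follows from compactness of $K:=\mathrm{supp}_{\tilde S}\tilde f$: since $x_0\in\overline{\tilde S}\setminus\tilde S$ and $K\subset\tilde S$ is compact hence closed in $\R^D$, the function $\tilde f$ vanishes on a full $\R^D$-neighborhood of $x_0$, so $f'\equiv 0$ near $x_0$ in $C^d$. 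The decisive case is an isolated boundary point $x_0\in I_{\partial S}\cap C^d$, where I claim a punctured $C^d$-neighborhood of $x_0$ lies in $S$: indeed, the relative boundary of $S$ in $C^d$ equals $\{x_0\}$ in a small chart, so $S$ corresponds to an open set whose topological boundary meets a small ball only in $x_0$; as $\dim C^d=D-1\geq 2$ this punctured ball is connected, and being met by $S$ it lies entirely in $S$. Hence $f'(y)=\tilde f(y)\to\tilde f(x_0)=f'(x_0)$. This connectedness step is exactly where $D\geq 3$ enters, and I expect it to be the main obstacle.

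For the allowable extension I would put $\tilde f'(v):=\tilde f(v)$ if $v\in I_{\partial S}$ and $\tilde f'(v):=0$ otherwise. Continuity at $v$ within $\tilde C^d$ follows from the same dichotomy: if $v\notin\overline S$ or $v$ is a non-isolated boundary point then $\tilde f$ vanishes near $v$ by compactness of $K$, while if $v\in I_{\partial S}$ one uses that a punctured neighborhood of the vertex inside $C^d$ is connected for $D\geq 3$, so $S$ fills it and $f'\to\tilde f(v)$. Compact support is then immediate, since $\{\tilde f'\neq 0\}\subset K$ and $K\subset\tilde S\subset\overline S\subset\tilde C^d$ is compact, giving $\mathrm{supp}_{\tilde C^d}\tilde f'\subset K$; thus $f'$ is allowable. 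Finally, for uniqueness, any allowable $f''$ obeying the two conditions agrees with $f'$ on $S\cup(C^d\setminus\tilde S)=C^d\setminus(I_{\partial S}\cap C^d)$, and at each remaining $x_0\in I_{\partial S}\cap C^d$ continuity together with the fact (established above) that $x_0$ is a limit of points of $S$ forces $f''(x_0)=\lim_{S\ni y\to x_0}f(y)=f'(x_0)$. Hence $f'$ is the unique allowable extension, which completes the argument.
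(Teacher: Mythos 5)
Your proposal is correct and follows essentially the same route as the paper: truncate the maximal analytic extension so that $\overline{S}\subset\tilde{C}^d$, extend $\tilde{f}$ by zero to $\tilde{C}^d$ and restrict to $C^d$, then get uniqueness from continuity together with the density of $C^d\setminus(I_{\partial S}\cap C^d)$. The one place you go beyond the paper is the pointwise continuity check at points of $I_{\partial S}\cap C^d$ via connectedness of a punctured chart ball (using $\dim C^d=D-1\geq 2$), which the paper leaves implicit behind the statement that $\tilde{f}$ tends to zero at $\partial\tilde{S}$; this is a worthwhile clarification rather than a different method.
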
  

\begin{proof}
Without loss of generality we can prove the lemma assuming that $S$ is a subset of a $d$-cone given by \eqref{cone}. As the cone $C^d$ let us take a $d$-cone given by \eqref{cone} of such a height $H$ that the closure $\overline{S}$ either does not intersect the boundary of $C^d$ or intersect the boundary only at the vertex of $C^d$.         

Note now that $\tilde{S}\subset \tilde{C}^d$. Indeed, it follows from the above assumption about $C^d$ that $\overline{S}\subset \tilde{C}^d$. On the other hand $\tilde{S}\subset\overline{S}$.     

According to Definition \ref{def_fun} there exists a compactly supported continuous function $\tilde{f}$ on $\tilde{S}$ such that $f=\tilde{f}|_S$. Therefore for every $x_0\in\partial\tilde{S}$ (where $\partial\tilde{S}$ is the boundary of $\tilde{S}$ as a subset of $\tilde{C}^d$)
\[
\lim_{x\to x_0}\tilde{f}(x)=0.
\]     
This means that the function $\tilde{f}':\tilde{C}^d\to\R$ defined as
\begin{equation}
\tilde{f}'(x):=
\begin{cases}
\tilde{f}(x) & \text{if $x\in\tilde{S}$}\\
0 & \text{otherwise}
\end{cases}
\label{f'f}
\end{equation}
is continuous and of compact support in $\tilde{C}^d$. Hence $f':=\tilde{f}'|_{C^d}$ is an allowable smearing function on $C^d$. Moreover it satisfies the condition \eqref{f'}. In this way we showed existence of $f'$.

In order to show uniqueness of $f'$ it is enough to note that $f'$ is continuous on $C^d$ and the condition \eqref{f'} defines its values everywhere on $C^d$ except the nowhere dense set $I_{\partial S}\cap C^d$.       
\end{proof}

Note that $X_{S,f}\neq X_{C^d,f'}$ because there may exist isolated boundary points of $S$ belonging to $C^d$ for which the values of $f'$ are non-zero. However, as stated by Lemma \ref{Ips-nd} the set $I_{\partial S}\cap C^d$ is quite ``small''. Therefore we will say that $\hat{X}_{S,f}$ and $\hat{X}_{C^d,f'}$ such that $(S,f)$ and $(C^d,f')$ are related as in Lemma \ref{lem-X-ext} are {\em almost equal} and will write        
\[
\hat{X}_{S,f}\approx \hat{X}_{C^d,f'}.
\]

Now we can draw the following conclusion:

\begin{cor}
Assume that
\[
\hat{X}_{S,f}\approx \hat{X}_{C_1^d,f'_1} \ \ \ \text{and} \ \ \ \hat{X}_{S,f}\approx \hat{X}_{C_2^{d'},f'_2}.
\]
Then either $d=d'=1$ or both $d,d'$ are greater than $1$. Moreover, the cones share the common vertex $v$ and  
\[
\tilde{f}'_1(v)=\tilde{f}'_2(v).
\]  
\label{cc-dd'}
\end{cor}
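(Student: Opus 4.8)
The plan is to deduce all three assertions from a single geometric fact: the two cones $C_1^d$ and $C_2^{d'}$ are analytic continuations of one and the same local piece, namely $S$, and therefore share a maximal analytic extension. First I would observe that $S$ is \emph{relatively open} in each cone. Indeed, $\hat{X}_{S,f}\approx\hat{X}_{C_1^d,f'_1}$ (and likewise for the second) entails, by Lemma \ref{lem-X-ext}, that $S\subset C_i^{(\cdot)}$; and since $S$ and each cone are both $(D-1)$-dimensional embedded submanifolds of $\R^D$, at every $p\in S$ the tangent spaces satisfy $T_pS=T_pC_i^{(\cdot)}$. Passing to a slice chart that straightens the cone to a hyperplane and recalling that $S$ equals its own interior, one sees that $S$ is open in each cone. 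Hence $C_1^d$ and $C_2^{d'}$ contain the common nonempty relatively open subset $S$, and by uniqueness of real-analytic continuation their maximal analytic extensions coincide; write $C_{1,M}^d=C_{2,M}^{d'}=:C_M$.

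For the common vertex I would use the intrinsic description of the extension from the proof of Lemma \ref{orient}: for the model \eqref{max} one has $\R^D=U_+\cup U_-\cup C_M\cup\{v\}$, so the vertex is the unique point of $\overline{C_M}\setminus C_M$. Since this set is determined by $C_M$ alone and $C_{1,M}^d=C_{2,M}^{d'}$, the two cones have the same vertex $v$; moreover $v$ is the (unique) isolated boundary point of each cone, so $v\in\tilde{C}_1^d\cap\tilde{C}_2^{d'}$.

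The type dichotomy would then follow from Lemma \ref{ineq}. By definition $C_M=\varphi_1(M_d)=\varphi_2(M_{d'})$, where $M_d,M_{d'}$ are the standard maximal cones \eqref{max} of types $d,d'$ and $\varphi_1,\varphi_2$ are analytic diffeomorphisms of $\R^D$; hence $\varphi_2^{-1}\circ\varphi_1$ carries $M_d$ analytically onto $M_{d'}$. Lemma \ref{ineq} rules this out whenever exactly one of $d,d'$ equals $1$, so either $d=d'=1$ or both $d,d'>1$, as claimed. No finer statement is available, in accordance with the remark following \eqref{max}.

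It remains to show $\tilde{f}'_1(v)=\tilde{f}'_2(v)$. Each $\tilde{f}'_i$ is continuous on $\tilde{C}_i$ with $v\in\tilde{C}_i$ (Lemma \ref{lem-X-ext}). Choosing a small punctured neighborhood $W$ of $v$ in $C_M$ with $W\subset C_1^d\cap C_2^{d'}$ (possible since near its vertex each bounded cone fills out $C_M$), formula \eqref{f'} gives $f'_1=f'_2$ on $W\setminus I_{\partial S}$: on $W\cap S$ both equal $f$, while on $W\setminus\tilde{S}$ both vanish, and $W\setminus I_{\partial S}=(W\cap S)\cup(W\setminus\tilde{S})$. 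Since $I_{\partial S}\cap C_M$ is nowhere dense (Lemma \ref{Ips-nd}), one may pick a sequence in $W\setminus I_{\partial S}$ converging to $v$; continuity of the two extensions then yields $\tilde{f}'_1(v)=\tilde{f}'_2(v)$. I expect the main obstacle to be the analytic-continuation step producing $C_{1,M}^d=C_{2,M}^{d'}$, where one must make precise that a shared relatively open piece determines the entire maximal extension, together with its clean handoff to Lemma \ref{ineq} for the type statement.
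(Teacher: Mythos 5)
Your proof is correct and follows essentially the same route as the paper's: analyticity forces $S$, $C_1^d$ and $C_2^{d'}$ into a common maximal analytic extension, which yields the shared vertex and (via Lemma \ref{ineq}) the type dichotomy, and then $f'_1=f'_2$ near $v$ off the nowhere dense set $I_{\partial S}$ (Lemmas \ref{lem-X-ext} and \ref{Ips-nd}) gives $\tilde{f}'_1(v)=\tilde{f}'_2(v)$ by continuity. Your write-up merely fills in more detail (relative openness of $S$ in each cone, the identification of $v$ as $\overline{C_M}\setminus C_M$) than the paper's terse two-sentence argument.
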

\begin{proof}
Because $S\subset C_1^d$ and $S\subset C_2^{d'}$ and because the faces are analytic all of them are contained in the maximal analytic extension of $C^d_1$. Therefore either $d=d'=1$ or both $d,d'$ are greater than $1$ and the cones share the same vertex $v$. 

Moreover, there  exist a neighborhood $U$ of $v$ in $\R^D$ such that $U\cap {C}_1^d=U\cap {C}_2^{d'}$. By virtue of Lemmas \ref{lem-X-ext} and \ref{Ips-nd} the functions ${f}'_1$ and ${f}'_2$ coincide on $U\cap {C}_1^d$ hence $\tilde{f}'_1(v)=\tilde{f}'_2(v)$. 
\end{proof}

\begin{lm}
Let $c_M$ be the closure of the maximal analytic extension of a $d$-cone $C^d$. Suppose that among analytical faces $\{S_1,\ldots,S_k\}$ there is no one which is a subset of $c_M$. Then
\begin{equation}
\bigcup_{i=1}^k C^d\cap S_i
\label{ndense}
\end{equation}
is a nowhere dense subset of $C^d$. 
\label{lm-dense}
\end{lm}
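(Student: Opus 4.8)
The plan is to reduce the assertion to a single face and then to a real-analytic continuation argument. Since a finite union of nowhere dense sets is again nowhere dense, it suffices to show that for each $i$ the set $C^d\cap S_i$ is nowhere dense in $C^d$, under the standing hypothesis that $S_i$ is an analytic face with $S_i\not\subset c_M$. Throughout I will write $v$ for the vertex of $C^d$ and use that, by the explicit form \eqref{max}, the closure of the maximal analytic extension satisfies $c_M=C^d_M\cup\{v\}$ with $v\notin C^d_M$, and that $C^d$ is open in $C^d_M$.

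The heart of the matter is the claim that the interior of $C^d\cap S_i$ relative to $C^d$ is empty. I would argue by contradiction: suppose there is a nonempty set $W$, open in $C^d$, with $W\subset C^d\cap S_i$. Since $C^d$ is open in $C^d_M$ and $W$ is a $(D-1)$-dimensional piece contained in the $(D-1)$-dimensional manifold $S_i$, the set $W$ is open both in $C^d_M$ and in $S_i$ (by equality of dimensions); hence $C^d_M$ and $S_i$ coincide as subsets of $\R^D$ on a full neighborhood of each point of $W$. Note also $W\subset C^d_M$, so $v\notin W$. Because $\dim S_i=D-1\ge 2$ (recall $D\ge3$), the manifold $S_i\setminus\{v\}$ remains connected, and I would consider the set $A$ of all $q\in S_i\setminus\{v\}$ admitting a neighborhood $O$ in $\R^D$ with $S_i\cap O=C^d_M\cap O$. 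This $A$ contains $W$ (so is nonempty) and is manifestly open in $S_i\setminus\{v\}$.

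Closedness of $A$ in $S_i\setminus\{v\}$ is where analyticity enters and is the step I expect to be the main obstacle, the delicate point being the behaviour near the vertex. Given $q_n\to q$ with $q_n\in A$ and $q\in S_i\setminus\{v\}$, each $q_n$ lies in $C^d_M$, so $q\in\overline{C^d_M}=c_M$; since $q\ne v$ this forces $q\in C^d_M$, so near $q$ both $S_i$ and $C^d_M$ are analytic hypersurfaces. Their tangent planes agree along the $q_n$, hence, by continuity of the tangent distribution, at $q$; thus near $q$ both are graphs of real-analytic functions $\psi_1,\psi_2$ over a common connected domain in their shared tangent plane. Coincidence of $S_i$ and $C^d_M$ near each $q_n$ makes $\psi_1-\psi_2$ vanish on a nonempty open set, so $\psi_1\equiv\psi_2$ by the identity theorem for real-analytic functions, whence $S_i=C^d_M$ near $q$ and $q\in A$. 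Connectedness of $S_i\setminus\{v\}$ then yields $A=S_i\setminus\{v\}\subset C^d_M$, so $S_i\subset C^d_M\cup\{v\}=c_M$, contradicting the hypothesis; this establishes the claim.

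Finally I would pass from ``empty interior'' to ``nowhere dense'' using that the embedded submanifold $S_i$ is locally closed. If the closure of $C^d\cap S_i$ in $C^d$ had nonempty interior, it would contain a nonempty open $V\subset C^d$; choosing around a point of $V$ a neighborhood $O$ in which $S_i$ is closed, the set $C^d\cap S_i$ is closed in the open set $C^d\cap O$, so every point of $V\cap O$, being a limit of points of $C^d\cap S_i$ lying in $O$, already belongs to $C^d\cap S_i$. Then $V\cap O$ would be a nonempty open subset of $C^d$ contained in $C^d\cap S_i$, contradicting the emptiness of its interior. Hence each $C^d\cap S_i$ is nowhere dense in $C^d$, and therefore so is the finite union \eqref{ndense}.
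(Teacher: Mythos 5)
Your proof is correct and takes essentially the same route as the paper's: argue by contradiction that a non--nowhere-dense intersection would, by the continuity/local closedness of the embedded face, contain a genuinely open piece of overlap with the cone, and then use analyticity (your open--closed continuation argument on $S_i\setminus\{v\}$) to force $S_i\subset c_M$, contradicting the hypothesis, before finishing with the finite-union fact. The paper compresses both key steps into one sentence, whereas you supply the details; the only point to tidy is that the point of $V$ around which $S_i$ is locally closed must be chosen in $V\cap S_i$, which is indeed nonempty because the nonempty open set $V$ is contained in $\overline{C^d\cap S_i}$.
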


\begin{proof}
Suppose first that, given $S_i$, the set $c_M\cap S_i$ is not a nowhere dense subset of $c_M$. Then using the continuity of the face $S_i$ one can show that $c_M\cap S_i$ contains an open subset of $c_M$, hence $S_i\subset c_M$ by the analyticity of both faces which contradicts the assumptions of the lemma.

Thus for any $i$ the set $c_M\cap S_i$ is a nowhere dense subset of $c_M$. A finite union of nowhere dense subsets is nowhere dense again \cite{top}, thus 
\[
\bigcup_{i=1}^k {c_M\cap S_i} 
\]
is nowhere dense. $C^d$ is an embedded submanifold of $\R^D$ and therefore it is an open subset of $c_M$. Hence \eqref{ndense} is nowhere dense in $C^d$. 
\end{proof}

Consider now a cone $C^d$ and a finite set of faces $\{S_1,\ldots,S_k\}$ contained in $C^d$ and a finite set of faces $\{S'_1,\ldots,S'_{k'}\}$ such that no one is contained in the closure of the maximal analytic extention of $C^d$. Then it follows immediately from Lemmas \ref{Ips-nd} and \ref{lm-dense} that the set
\begin{equation}
\Big(\bigcup_{i=1}^{k} C^d\cap I_{\partial S_i} \Big)\cup \Big(\bigcup_{j=1}^{k'} C^d\cap S'_j\Big)
\label{sss}
\end{equation}
is nowhere dense in $C^d$. Hence we have the following corollary: 
\begin{cor}
The set
\begin{equation}
\check{C}^d:=C^d\setminus \ \text{\rm the set \eqref{sss}} 
\label{dens-C}
\end{equation}
is dense in $C^d$. 
\label{c-dens-C}
\end{cor}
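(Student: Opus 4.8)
The plan is to show that the set \eqref{sss} is nowhere dense in $C^d$ and then to conclude immediately, since the complement of a nowhere dense set is always dense. I would split \eqref{sss} into its two constituent unions and treat each with one of the preceding lemmas.

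For the first union, every face $S_i$ is by assumption a subset of the $d$-cone $C^d$, so Lemma \ref{Ips-nd} applies verbatim and shows that each $C^d\cap I_{\partial S_i}$ is nowhere dense in $C^d$. For the second union, none of the faces $S'_1,\ldots,S'_{k'}$ is contained in the closure $c_M$ of the maximal analytic extension of $C^d$, which is exactly the hypothesis of Lemma \ref{lm-dense}; that lemma therefore gives at once that $\bigcup_{j=1}^{k'} C^d\cap S'_j$ is nowhere dense in $C^d$.

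It then remains to assemble these pieces. The set \eqref{sss} is the union of the $k$ nowhere dense sets $C^d\cap I_{\partial S_i}$ and the single nowhere dense set $\bigcup_{j=1}^{k'} C^d\cap S'_j$, hence a finite union of nowhere dense subsets of $C^d$, and such a union is again nowhere dense. Writing $N$ for the set \eqref{sss}, its closure in $C^d$ has empty interior, so $\check{C}^d=C^d\setminus N$ contains the dense set $C^d\setminus\overline{N}$ and is therefore itself dense in $C^d$. There is no real obstacle in this argument: it is a purely formal consequence of the two lemmas together with the stability of nowhere density under finite unions. The only point needing a moment's attention is checking that the hypotheses of the lemmas are met, namely $S_i\subset C^d$ for the first family and $S'_j\not\subset c_M$ for the second, both of which are built into the standing assumptions on the two collections of faces.
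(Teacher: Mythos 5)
Your proposal is correct and follows exactly the paper's own route: the text preceding the corollary derives the nowhere-density of the set \eqref{sss} from Lemmas \ref{Ips-nd} and \ref{lm-dense} together with closure of nowhere dense sets under finite unions, and then passes to the dense complement. Your write-up merely spells out the same steps in more detail.
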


\subsection{Main theorem}

Since now we will treat each flux operator as based on either $(i)$ a face being a subset of a $d$-cone with an orientation consistent with the {\em external} orientation of the cone or $(ii)$ a face which is not contained in any $d$-cone (if the orientation of a face contained in a $d$-cone is not consistent with the external orientation of the cone then we change it together with the sign of the smearing function on the face). Using this convention we state the main theorem of this paper:

\begin{thm}
The following assignment $\omega$ of a complex number to elements of the modified holonomy-flux algebra $\tilde{\mathfrak{A}}$
\begin{align}
&\omega(\hat{\Psi})=\mu_{AL}(\Psi), \nonumber\\
&\omega(\hat{X}_{S,f})=
\begin{cases}
\chi(d)\,\tilde{f}(v) &\text{if $S$ is a $d$-cone of vertex $v$ }\\
\omega(\hat{X}_{C,f'}) & \text{if $S$ is a subset of a $d$-cone $C$ and  $\hat{X}_{S,f}\approx \hat{X}_{C,f'}$ }\\
0 & \text{otherwise}
\end{cases},
\label{tw1_f2}
\\
& \omega(\hat{\Psi}\hat{X}_{S_1,f_1}\ldots\hat{X}_{S_m,f_m})=\omega(\hat{\Psi})\omega(\hat{X}_{S_1,f_1})\ldots\omega(\hat{X}_{S_m,f_m})\ \ \text{for all $\hat{\Psi}$ and $\hat{X}_{S,f}$},\nonumber
\end{align}
where $\chi:[1,\infty[\to\mathbb{R}$ is a function constant on $]1,\infty[$, extends by linearity to a diffeomorphism invariant state on the algebra.
\label{main}
\end{thm}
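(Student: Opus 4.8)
The assignment \eqref{tw1_f2} is exactly the ansatz \eqref{new-st} with $\mu=\mu_{AL}$ and a particular prescription for the flux values, and those values $\chi(d)\,\tilde f(v)$ are manifestly real (the smearing functions are $\R$-valued and $\chi$ is real-valued). My plan is therefore to reduce the theorem to two points: first, that \eqref{tw1_f2} really does determine a \emph{well-defined} linear functional on $\tilde{\mathfrak{A}}$ (this is the delicate point, since the elements \eqref{hfa-span} span but do not freely generate the algebra), and second, that this functional is diffeomorphism invariant. Once well-definedness is secured, reality of the flux values lets me invoke Lemma \ref{lm-real} to conclude that $\omega$ is real, and then Lemma \ref{lem_positivity} to conclude that it is a state; so the genuinely new work is well-definedness and invariance.

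The first step is to check that the number $\omega(\hat X_{S,f})$ depends only on the flux operator and not on the way it is presented. Linearity in $f$ is immediate, because both the continuous extension $f\mapsto\tilde f$ and evaluation at the vertex are linear; the orientation ambiguity $\hat X_{S,f}=\hat X_{-S,-f}$ is precisely what the external-orientation convention fixed before the theorem removes, and this convention is consistent only thanks to Lemma \ref{orient} and Corollary \ref{cor-or}, which guarantee that no analytic diffeomorphism can reverse the distinguished orientation of a cone. The remaining freedom is in the type $d$ and in the choice of enclosing cone: here Corollary \ref{cc-dd'} shows that any two admissible presentations share the vertex and assign the same value $\tilde f(v)$ and have types that are either both $1$ or both greater than $1$; since $\chi$ is constant on $]1,\infty[$ the product $\chi(d)\,\tilde f(v)$ is unambiguous. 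The subset case is reduced to the cone case consistently by almost-equality (Lemma \ref{lem-X-ext}). I expect this geometric bookkeeping, and in particular the orientation rigidity of Lemma \ref{orient}, to be the real heart of the argument.

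To upgrade this to well-definedness on all of $\tilde{\mathfrak{A}}$ I would avoid arguing directly about relations among the non-independent generators \eqref{hfa-span} and instead realize $\omega$ as $\omega_0\circ\Theta$. Here $\omega_0$ is the standard state, which by the $\chi\equiv 0$ case of Lemma \ref{lem_positivity} is a state on the modified algebra, and $\Theta$ is the map fixing every $\hat\Psi$ and sending $\hat X_{S,f}\mapsto\hat X_{S,f}+\omega(\hat X_{S,f})\,\hat 1$. By the previous paragraph $\omega(\hat X_{S,f})$ is a well-defined real number depending $\R$-linearly on $f$, so $\Theta$ respects flux linearity, orientation and the $*$-operation; because the shift is by a central element it also respects the commutation relation \eqref{comm} and the commutativity of fluxes (which holds because $G=U(1)$ is abelian), hence $\Theta$ is a unital $*$-automorphism of $\tilde{\mathfrak{A}}$. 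Consequently $\omega=\omega_0\circ\Theta$ is automatically a well-defined state, and expanding $\omega_0(\Theta(\hat\Psi\hat X_{S_1,f_1}\cdots\hat X_{S_m,f_m}))$ and using that $\omega_0$ annihilates every term still containing a flux (equation \eqref{fl-zero}) reproduces exactly the product formula in \eqref{tw1_f2}. This simultaneously settles well-definedness and positivity.

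Finally, for diffeomorphism invariance I would verify \eqref{diff-inv} on the spanning elements and extend by linearity. Using \eqref{dif_action}, invariance on cylindrical functions is $\mu_{AL}(\varphi^{*}\Psi)=\mu_{AL}(\Psi)$, the diffeomorphism invariance of the Ashtekar--Lewandowski measure. For a flux operator based on a $d$-cone the image $\varphi(S)$ under an analytic $\varphi$ is again a $d$-cone with vertex $\varphi(v)$ (of the same type $d$), its external orientation is the one induced by $\varphi$, and $\widetilde{(\varphi^{-1})^{*}f}(\varphi(v))=\tilde f(v)$ by \eqref{I_cov}, so $\omega(\hat X_{\varphi(S),(\varphi^{-1})^{*}f})=\chi(d)\,\tilde f(v)=\omega(\hat X_{S,f})$; the subset case follows from this by almost-equality, and the ``otherwise'' case is stable because $\varphi^{-1}$ is analytic and would carry any cone enclosing $\varphi(S)$ to one enclosing $S$. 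The product formula then propagates invariance to each spanning element. The main obstacle throughout is the well-definedness flagged in the text, whose entire weight rests on the cone geometry of Section \ref{cone-like}, most delicately on the orientation rigidity of Lemma \ref{orient}.
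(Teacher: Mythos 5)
Your handling of Steps 1--3 tracks the paper faithfully: unambiguity of the generator-level assignment via Lemma \ref{orient}, Corollary \ref{cor-or} and Corollary \ref{cc-dd'}; reality and positivity via Lemmas \ref{lm-real} and \ref{lem_positivity}; and diffeomorphism invariance via \eqref{I_cov}, Lemma \ref{ineq} and the invariance of $\mu_{AL}$. The gap is in how you dispose of well-definedness. You realize $\omega$ as $\omega_0\circ\Theta$ with $\Theta(\hat X_{S,f})=\hat X_{S,f}+\omega(\hat X_{S,f})\hat 1$ and declare $\Theta$ a unital $*$-automorphism because the shift is central and ``$\omega(\hat X_{S,f})$ depends $\R$-linearly on $f$''. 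But for $\Theta$ to be well defined on $\tilde{\mathfrak{A}}$ the numerical assignment $\hat X_{S,f}\mapsto\omega(\hat X_{S,f})$ must respect \emph{every} linear relation that holds among the flux operators as derivations on $Cyl^\infty$, not only linearity in $f$ at fixed $S$ and the orientation flip. Such relations genuinely exist between fluxes based on \emph{different} faces: the paper itself notes that $X_{S,f}\neq X_{C^d,f'}$ only \emph{because} of possible isolated boundary points of $S$ inside $C^d$, so whenever $I_{\partial S}\cap C^d=\varnothing$ one has the exact operator identity $X_{S,f}=X_{C^d,f'}$; and a priori there are further relations among the products $\hat\Psi\hat X_{1}\cdots\hat X_{n}$. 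Checking that the assignment \eqref{tw1_f2} is compatible with all of them is precisely the paper's Step 4 and constitutes the bulk of its proof: an arbitrary vanishing combination $\sum_i\hat\Psi_i\hat X_{i1}\cdots\hat X_{ik_i}=0$ is evaluated in the Ashtekar--Lewandowski representation on test functions $\Phi=\exp[\,i\sum_j n_j h_{e_j}\,]$ whose edges pierce the relevant cones at points of the dense sets $\check{C}_j$ of Corollary \ref{c-dens-C} while avoiding all other faces; this yields a polynomial identity in the $n_j$, hence the pointwise identities \eqref{psi-fff} on $\check{C}_\times$, and only then does one pass to the limit at the vertices and integrate with $\mu_{AL}$. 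As written, ``$\Theta$ respects flux linearity'' asserts the very thing that needs to be proved.

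If the well-definedness of the flux functional on the linear span of the derivations $\{X_{S,f}\}$ (and its compatibility with the relations among the spanning elements \eqref{hfa-span}) were established, your $\omega_0\circ\Theta$ device would be an economical way to package positivity and the product formula, and it is consistent with the description of the GNS representation in the paper's Summary. But that is the missing lemma, and it is not elementary: it is where the analyticity of the faces, the density statements of Lemmas \ref{Ips-nd} and \ref{lm-dense}, and the commutativity of $U(1)$ fluxes (which makes $\Phi$ a simultaneous eigenvector) are actually used.
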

\begin{rem}
The value $\tilde{f}(v)$ is equal to $\lim_{x\to v}f(x)$, thus the assignment above agrees with the preliminary description given in Subsection \ref{non-st}.   
\end{rem}

Note that the assignment \eqref{tw1_f2} is defined also for a product of flux operators. Indeed,  
\[
\omega(\hat{X}_{S_1,f_1}\ldots\hat{X}_{S_m,f_m})=\omega(\hat{1}\hat{X}_{S_1,f_1}\ldots\hat{X}_{S_m,f_m})=\omega(\hat{X}_{S_1,f_1})\ldots\omega(\hat{X}_{S_m,f_m}),
\]
where $\hat{1}$ is the unit of $\tilde{\mathfrak{A}}$ given by the constant cylindrical function $\Psi_0$ of value $1$ for which $\omega(\hat{1})=\mu_{\rm AL}(\Psi_0)=1$.

\begin{proof}[Proof of Theorem \ref{main}.] 
For the sake of clarity let us divide the proof into some steps. In each step we will justify a statement:
\begin{description}
\item[\hspace{10pt} Step 1:] The assignment \eqref{tw1_f2} is well defined.
\item[\hspace{10pt} Step 2:] If the assignment defines a linear functional on $\tilde{\mathfrak{A}}$ then the functional is a state on the algebra.  
\item[\hspace{10pt} Step 3:] If the assignment defines a linear functional then the functional is diffeomorphism invariant. 
\item[\hspace{10pt} Step 4:] The assignment defines a linear functional on $\tilde{\mathfrak{A}}$.
\end{description}
Once we will justify all the statements the proof of the theorem will be complete.\smallskip

{\bf Step 1.} Let us first check whether the assignment \eqref{tw1_f2} is unambiguously defined --- a potential source of an ambiguity is twofold: $(i)$ to define the assignment we use the type $d$ of a cone, but we are not sure whether the type can be uniquely ascribed to the cone unless $d=1$  and $(ii)$ the relation  $\hat{X}_{S,f}\approx \hat{X}_{C,f'}$ is not a one-to-one relation.  Assume then that $S$ is at the same time a cone of type $d>1$ and a cone of type $d'>1$ with $d'\neq d$. Then it follows from the properties of the function $\chi$ that the value $\omega(\hat{X}_{S,f})$ is well defined. Regarding $(ii)$ assume now that $S$ is a subset of a $d$-cone $C$. Then there are many cones containing $S$ and therefore there are many operators which are almost equal to $\hat{X}_{S,f}$. But by virtue of both Corollary \ref{cc-dd'} and the properties of $\chi$ the value assigned to $\hat{X}_{S,f}$ is again well defined.\smallskip

{\bf Step 2.} Assume that the assignment \eqref{tw1_f2} extends by linearity to a (linear) functional $\omega$ on $\tilde{\Af}$. Then by Lemma \ref{lm-real} the functional is real, and hence by virtue of Lemma \ref{lem_positivity} it is a state on the algebra.\smallskip

{\bf Step 3.} The assignment $\hat{\Psi}\mapsto \mu_{\rm AL}(\Psi)$ is diffeomorphism invariant by virtue of the properties of the measure $\mu_{\rm AL}$ \cite{al-hoop}. 

Suppose that the action \eqref{dif_action} of a diffeomorphism $\varphi$ maps $\hat{X}_{C^d,f}$ to $\hat{X}_{C^{\prime d'},f'}$. The diffeomorphism covariance \eqref{I_cov} of the set of isolated boundary points means that the vertex $v$ of the cone $C^d$ is mapped to the vertex $v'$ of $C^{\prime d'}$. Therefore $\tilde{f}'(v')=\tilde{f}(v)$. On the other hand as stated by Lemma \ref{ineq} any two cones $C^{1}$ and $C^{d}$ for $d>1$ are diffeomorphicly inequivalent i.e. there is no diffeomorphism which maps $C^1$ onto $C^d$. Consequently, the value $\chi(d)$ associated to $C^d$ is equal to $\chi(d')$ associated to $C^{\prime d'}$. Finally by virtue of Lemma \ref{orient} if $C^d$ is externally oriented then $C^{\prime d'}$ is externally oriented also. All these mean that the assignment $\hat{X}_{C^d,f}\mapsto\omega(\hat{X}_{C^d,f})$ given by \eqref{tw1_f2} is diffeomorphism invariant. 

Note finally that $(i)$ the relation $\hat{X}_{S,f}\approx \hat{X}_{C,f'}$ is diffeomorphism invariant  and $(ii)$ if $S$ is not contained in any $d$-cone then for every analytic diffeomorphism $\varphi$ the image $\varphi(S)$ is not contained in any $d$-cone either.       

Hence if the assignment \eqref{tw1_f2} defines a linear functional then the functional is diffeomorphism invariant.\smallskip

{\bf Step 4.} Let us start this part of the proof with simplifying the notation: instead of $\hat{X}_{S_\alpha,f_\alpha}$, where $\alpha$ is a label or a multilabel, we will write $\hat{X}_\alpha$. Consequently, given $\hat{X}_\alpha$, we will refer to corresponding smearing functions as $f_\alpha$ and $\tilde{f}_\alpha$.      

Note now that it is not obvious at all that an extension of the assignment given by Theorem \ref{main} to a linear functional on $\tilde{\Af}$ is possible: the source of a potential obstacle is the fact that we assigned numbers to elements of the form $\hat{\Psi}\hat{X}_{1}\ldots\hat{X}_{n}$---these elements span the algebra $\tilde{\Af}$ but they {\em are not} linearly independent. Therefore we have to check whether the assignment respects linear dependence between the elements under consideration. It is easy to realize that there is no obstacle for the extension if for every sum of the elements such that
\begin{equation} \label{tw1_eq1}
\sum_{i=1} ^{n}\hat{\Psi}_{i}\hat{X}_{i1}\ldots\hat{X}_{ik_i}=0.
\end{equation}
the sum of the assigned numbers is zero:
\begin{equation}
\sum_{i=1} ^{n}\omega(\hat{\Psi}_{i}\hat{X}_{i1}\ldots\hat{X}_{ik_i})=0.
\end{equation}

Let us now divide the set $\cal T$ of all terms occurring in the sum \eqref{tw1_eq1} in a way convenient for further considerations: we distinguish three disjoint subsets of $\cal T$ as follows:
\begin{enumerate}
\item the subset $\cal F$ consists of terms of the form $\hat{\Psi}_i$ ($\cal F$ stands for functions);
\item the subset $\cal C$ consists of terms of the form $\hat{\Psi}_{i}\hat{X}_{i1}\ldots\hat{X}_{ik_i}$ such that {\em every} flux operator $\hat{X}_{ij}$ is based on a face contained in a $d$-cone;  
\item the subset $\cal O:={\cal T}\setminus({\cal F}\cup{\cal C})$ is a subset of all other terms.  
\end{enumerate}

Note now that
\[
{\rm span}\,\{\ a \ | \ a\in{\cal F}\ \}\ \cap\ {\rm span}\,\{\ a \ | \ a\in{\cal C} \ \text{or} \ a\in{\cal O}\ \} =0.
\]
It follows from this fact and  \eqref{tw1_eq1} that 
\begin{equation} \label{tw1_eq*}
\sum_{a\in {\cal F}}a=0.
\end{equation}
Therefore
\[
\sum_{a\in {\cal F}}\omega(a)=\sum_{a\in {\cal F}}\mu_{\rm AL}(a)=\mu_{\rm AL}(\sum_{a\in {\cal F}}a)=0.
\]
Because $\omega(a)=0$ for any $a\in{\cal O}$ our task is reduced to proving that
\[
\sum_{a\in {\cal C}}\omega(a)=0.
\]

By virtue of \eqref{tw1_eq1} and \eqref{tw1_eq*} 
\[
\sum_{a\in{\cal C}\cup{\cal O}}a=0,
\]
which means that for any cylindrical function $\Phi$ 
\begin{equation}
\sum_{a\in{\cal C}\cup{\cal O}}\pi_{\rm AL}(a)\Phi=0,
\label{pi_phi}
\end{equation}
where $\pi_{\rm AL}$ is the Ashtekar-Lewandowski representation\footnote{The representation is defined as follows
\[
\pi_{\rm AL}(\hat{\Psi})\Phi:=\Psi\Phi, \ \ \ \pi_{\rm AL}(\hat{X}_{S,f})\Phi:=-iX_{S,f}\Phi.
\]} of $\tilde{\Af}$ \cite{area}. Now we are going to choose some specific cylindrical functions $\Phi$ in the equation above to extract some relevant information about elements of $\cal C$ while neglecting at the same time all information about elements in $\cal O$. The construction of these functions proceeds as follows.   
   
Denote by $\cal S$ the set of all faces underlying the flux operators occurring in the sum \eqref{tw1_eq1}. It is not difficult to realize that there exists a set ${\cal S}_C=\{C_1,\ldots,C_m\}$ of $d$-cones of the following properties:
\begin{enumerate}
\item given a cone $C_i\in {\cal S}_C$ every $S\in{\cal S}$ is either contained in $C_i$ or it is not contained in closure of the maximal analytic extension of $C_i$;       
\item given $C_i$ there exist at least one $S\in{\cal S}$ such that $S$ is contained in $C_i$;     
\item given $C_i\neq C_j$, $C_i$ is not contained in the maximal analytic extension of $C_j$.
\end{enumerate}

Thanks to the property 1 above for every $C_i\in{\cal S}_C$ we can define a set $\check{C}_i$ as it was described at the very end of the previous section (see Equations \eqref{sss} and \eqref{dens-C}). Let 
\begin{equation} \label{tw1_cart}
\check{C}_{\times}:=\check{C}_1\times\ldots\times\check{C}_m 
\end{equation}
Given $(x_{1},\ldots,x_{m})\in\check{C}_{\times}$, we can find a graph ${\gamma}$ with edges $\{e_1,\ldots,e_m\}$ such that for every $i=1,\ldots,m$ 
\begin{enumerate}
\item $e_i\cap C_i=x_i$ and $x_i$ is the source of $e_i$; 
\item $e_i$ lies ``up'' the oriented cone $C_i$;  
\item for every $S\in{\cal S}$ not contained in $C_i$ either $e_i\cap S=\varnothing$ or $e_i$ is contained in $\overline{S}$ \footnote{Let us fix a face $S\not\subset C_i$. Then $x_i\not\in S$. Consider an analytic edge $e$ satisfying the conditions 1 and 2. Now according to Definition \ref{face} it can be adapted to $S$ i.e. divided into shorter edges such that each of them  satisfies (modulo orientation) one of the three conditions of Definition \ref{adapted}. The shorter edge containing $x_i$ can be chosen in such a way that it satisfies the condition $(i)$ or $(iii)$ of the definition. Call this edge $e_S$. Then $e_i:=\bigcap e_S$ where the intersection runs over all faces $S\in{\cal S}$ not contained in $C_i$.}.
\end{enumerate}
It is easy to see that for any cylindrical functions $\Phi$ compatible with $\gamma$ the sum over ${\cal C}\cup{\cal O}$ in \eqref{pi_phi} reduces to a sum over $\cal C$.

Consider now an operator $\hat{X}_{ik}$ appearing in an element of $\cal C$ and the face $S$ underlying the operator. Note that if $S\subset C_j\in{\cal S}_C$ then $I_{\partial S}\cap \check{C}_j=\varnothing$ and consequently 
\[
\pi_{\rm AL}(\hat{X}_{ik})\Phi=\pi_{\rm AL}(\hat{X}_{C_j,f'_{ik}})\Phi
\]      
provided $\hat{X}_{ik}\approx \hat{X}_{C_j,f'_{ik}}$ (see Lemma \ref{lem-X-ext}). This means that while evaluating \eqref{pi_phi} we can use appropriate operators based on the cones in ${\cal  S}_C$.

Now let us choose $\Phi$ of the form
\begin{equation}
\Phi(A)=\exp[\,i\sum_{j=1}^m  n_j h_{e_j}(A)\,], \ \ \ n_j\in\mathbb{R},
\label{phi-exp}
\end{equation}
where $h_{e_j}(A)\in U(1)$ is the holonomy of $A$ along the edge $e$. Before we will evaluate the sum at the l.h.s. of \eqref{pi_phi} on cylindrical functions just introduced let us make some preparatory remarks. Note first that it follows immediately from  Definition \ref{flux-op} of flux operators and the form of the Ashtekar-Lewandowski representation that if $\hat{X}_{ik}\approx \hat{X}_{C_j,f'_{ik}}$ then  
\[
\pi_{\rm AL}(\hat{X}_{ik})\Phi=\pi_{\rm AL}(\hat{X}_{C_j,f'_{ik}})\Phi=n_jf'_{ik}(x_{ik}) \Phi.
\] 
where $x_{ik}\equiv x_j$. 

Given an element $a_i=\hat{\Psi}_{i}\hat{X}_{i1}\ldots\hat{X}_{ik_i}\in {\cal C}$, it can happen that some flux operators constituting the element are based on faces contained in the same $d$-cone $C_j$. Let ${\cal C}_{\vec{l}}$, $\vec{l}=(l_1,\ldots,l_m)$, denotes a subset of $\cal C$ such that in each element of the subset there are $l_j$ flux operators based on a face contained in the cone $C_j$. Thus if $a_i$ belongs to ${\cal C}_{\vec{l}}$ then
\[
\pi_{\rm AL}(a_i)\Phi=\pi_{\rm AL}(\hat{\Psi}_{i}\hat{X}_{i1}\ldots\hat{X}_{ik_i})\Phi=\Psi_i n^{l_1}_1\ldots n^{l_m}_m f'_{i1}(x_{i1})\ldots f'_{ik_i}(x_{ik_i})\Phi,
\]      
where for all the flux operators constituting $a_i$ there holds an appropriate relation $\hat{X}_{ik}\approx \hat{X}_{C_j,f'_{ik}}$ which defines the identification $x_{ik}\equiv x_j\in \check{C}_j$.  

After these preparations it is easy to evaluate the sum at the l.h.s. of \eqref{pi_phi} on  \eqref{phi-exp}:
\begin{multline*}
\sum_{a\in{\cal C}\cup{\cal O}}\pi_{\rm AL}(a)\Phi=\sum_{\vec{l}}\sum_{a\in{\cal C}_{\vec{l}}}\pi_{\rm AL}(a)\Phi=\\=\sum_{\vec{l}}\sum_{i\ |\ a_i\in{\cal C}_{\vec{l}}}\Psi_i n^{l_1}_1\ldots n^{l_m}_m f'_{i1}(x_{i1})\ldots f'_{ik_i}(x_{ik_i})\Phi=\\=\sum_{\vec{l}}\,n^{l_1}_1\ldots n^{l_m}_m\Big(\sum_{i\ |\ a_i\in{\cal C}_{\vec{l}}}\Psi_i  f'_{i1}(x_{i1})\ldots f'_{ik_i}(x_{ik_i})\Big)\Phi=0.
\end{multline*}

The key observation now is that the last equation holds for all $n_j\in\R$ and it is a polynomial of real numbers $\{n_j\}$. Therefore  
\begin{equation}
\sum_{i\ |\ a_i\in{\cal C}_{\vec{l}}}\Psi_i {f'}_{i1}(x_{i1})\ldots {f'}_{ik_i}(x_{ik_i})=0
\label{psi-fff}
\end{equation}
where we got rid of $\Phi$ (note that $\Phi(A)\neq 0$ for every $A$). 

Recall that $x_{ik}\equiv x_j\in \check{C}_j$ if $\hat{X}_{ik}\approx \hat{X}_{C_j,f'_{ik}}$. These means that the product of the smearing functions in \eqref{psi-fff} is evaluated at a point belonging to $\check{C}_\times$. Since $(x_1,\ldots,x_m)$ is an arbitrary point of the set, Equation \eqref{psi-fff} holds on the whole $\check{C}_\times$. It follows from Corollary \ref{dens-C} that this set is a dense subset of ${C}_1\times\ldots\times{C}_m$. Therefore for every $j$ we can pass to the limit $x_{ik}\equiv x_j\to v_{j}$, where $v_{j}$ is the vertex of the cone $C_j$. For each smearing function $f'_{ik}$ the limit is equal to $\tilde{f}'_{ik}(v_{j})$ (see Definition \ref{def_fun}). Thus we obtain 
\[
\sum_{i\ |\ a_i\in{\cal C}_{\vec{l}}}\Psi_i \tilde{f}'_{i1}(v_{j})\ldots \tilde{f}'_{ik_i}(v_{j'})=0.
\]  
Integrating both sides of the equation above by means of the Ashtekar-Lewandowski measure and multiplying them by an appropriate number of factors $\chi(d)$ corresponding to the cones $\{C_j\}$ we obtain
\[
\sum_{a_i\in{\cal C}_{\vec{l}}}\omega(a_i)=0
\]    
which holds for every $\vec{l}$. This ends the proof.
\end{proof}

\section{Summary}

In this paper we constructed a class of diffeomorphism invariant states on the modified holonomy-flux algebra. Each state defines the GNS representation $\pi$ of the algebra on a Hilbert space $\mathcal{H}$, which as in the case of the standard state is isomorphic to $L^{2}(\overline{\mathcal{A}},d\mu_{AL})$. The representation is described by the following simple formulae
\begin{align*}
&\pi(\hat{\Psi})\Phi=\Psi\Phi,\\ 
&\pi (\hat{X}_{S,f})\Phi=-i{X_{S,f}(\Phi)}+\omega(\hat{X}_{S,f})\Phi
\end{align*}
for any $\Phi\in Cyl^\infty\subset L^{2}(\overline{\mathcal{A}},d\mu_{AL})$. These hold true for the GNS representation defined by the standard state (that is, the Ashtekar-Lewandowski representation) the only difference being that then $\omega(\hat{X}_{S,f})$ is zero for every flux operator.

As mentioned in the introduction the main idea of this paper was to verify whether the example of a non-standard state presented in \cite{LOST} can be generalized to higher dimensional base manifolds. As we saw the answer is in the affirmative. It has to be however emphasized that the result relies heavily on the following assumptions concerning  $(i)$ the structure group of the principal bundle underlying the construction $(ii)$ the construction of the modified holonomy-flux algebra $\tilde{\Af}$ and $(iii)$ the choice of the gauge transformations acting on the algebra:   
\begin{enumerate}
\item[(i)] the structure group is $U(1)$. Note that by virtue of the last property of the ansatz \eqref{new-st} for any flux operators $\omega([\hat{X}_{S,f},\hat{X}_{S',f'}])=0$. In the case of a (semisimple) noncommutative structure group we can expect that flux operators may be expressed as commutators of other flux operators which would mean that our ansatz would be reduced to the formulae defining the standard state. Moreover, the fact that in the $U(1)$ case the flux operators commute allowed us to use in the proof of the main theorem \ref{main} the function $\Phi$ \eqref{phi-exp} being an eigenvector of all the operators. This property of $\Phi$ made that part of the proof relatively easy. Note also that in the $U(1)$ case the smearing functions can be treated as real ones hence the limit \eqref{assign-2} could be used as a value of the non-standard states  on $\hat{X}_{S,f}$ invariant with respect to Yang-Mills gauge transformations. 
\item[(ii-a)] the support of each smearing function $f: S\to\R$ is noncompact in a rather specific way i.e. it is compact modulo isolated boundary points of the face $S$. Our construction does not seem to be valid in the case of the holonomy-flux algebra allowing smearing functions with any non-compact supports. 
\item[(ii-b)]  the faces (and graphs) are analytic. We used properties of analytic faces in the proof of the main theorem \ref{main} (more precisely: in the proof that the assignment \eqref{tw1_f2} can be extended by linearity to a functional on $\tilde{\Af}$) and also in the proof of Lemma \ref{orient}. Perhaps the construction of the states is still valid in the semianalytic case but then a proof of Theorem \ref{main} would be technically much harder.
\item[(iii)] the gauge transformations do not contain any non-analytic diffeomorphisms, which is consistent with the choice of analytic faces.
\end{enumerate}

Let us finally state that in our opinion the result presented in this paper does not weaken the significance of the uniqueness theorem of \cite{LOST}. It is true, of course, that the assumptions of the theorem are not general, but we learnt from this research that in order to obtain a non-standard state we have to $(i)$ replace the simple and natural assumptions of the theorem by ones which are not so simple and natural and $(ii)$ base the construction on some very special structures like vertices of cones where the differentiability of a face is broken. Such structures are pointlike and therefore it seems that they cannot play any role in physical models where the smallest structures are expected to be of the Planck length. 

\paragraph{Acknowledgments}  This work was partially supported by the Polish Ministerstwo Nauki i Szkolnictwa Wy\.zszego grants 1 P03B 075 29 and 182/NQGG/ 2008/0, by the Foundation for Polish Science grant "Master" {and a Travel Grant from the QG research networking programme of the European Science Foundation.}

\appendix

\section{Graph not adaptable to a submanifold \label{counter}}

Let $\Sigma=\R^3$ and let 
\[
\mathbf{S}=\{\ (x,y,z)\in\R^3 \ | \ 0<x<2, \ -1< y <1, z=0 \ \}.
\]  
Denote by $B(p,r)$  a {\em closed} ball in $\R^3$ of center $p$ and radius $r>0$ and for $n\in\mathbb{N}$ define
\[
B_n:=B\Big(\Big(\frac{1}{n},0,0\Big),\frac{1}{4}\Big(\frac{1}{n}-\frac{1}{n+1}\Big)\Big)
\]
Clearly, $B_n\cap B_{n'}=\varnothing$ for $n\neq n'$. The set
\[
S:=\mathbf{S}\setminus \bigcup_{n=1}^\infty B_n
\]
is an embedded analytic submanifold of $\R^3$. Then the graph $\gamma=\{e\}$, where 
\[
e:=\{(x,y,z)\in \R^3 \ | \ 0\leq x\leq 1, \ y=z=0\ \}
\]
is an analytic edge, is  not adaptable to $S$ in the sense of Definition \ref{adapted}. Therefore according to Definition \ref{face} $S$ is not a face.  

Consider now the set
\[
S':=\mathbf{S}\setminus \{\ (1/n,0,0)\in\R^3\ | \ n\in\mathbb{N} \ \}.
\] 
$S'$ is again an embedded analytic submanifold of $\R^3$ and the graph $\gamma$  is adaptable to $S'$---it is in fact adapted to $S'$ since $e\subset\overline{S'}$. It is easy to realize that $S'$ is a face in the sense of  Definition \ref{face}. Moreover, the set $I_{\partial S'}$ of its isolated boundary points is infinite. Thus $S'$ is an example of a face with infinite isolated boundary points.     
   
\section{Inequivalence of $C^1$ and $C^d$ with $d>1$ with respect to diffeomorphisms }

\begin{lm}
There is no diffeomorphism on $\R^D$ which maps a cone $C^1$ into a cone $C^d$ with $d>1$ or a cone $C^d$ with $d>1$ into a cone $C^1$.    
\label{ineq} 
\end{lm}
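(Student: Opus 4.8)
The plan is to attach to each cone a diffeomorphism invariant that separates $C^1$ from $C^d$ with $d>1$, namely the dimension of the linear span of the tangent directions at the vertex of curves ingoing to the cone. Since a diffeomorphism of $\R^D$ maps a cone onto a cone carrying ingoing curves to ingoing curves and (by \eqref{I_cov}, the vertex being the unique isolated boundary point) vertex to vertex, the tangent map intertwines these sets of tangent vectors; being a linear isomorphism it preserves the dimension of their span. Computing this dimension on the two models then yields the contradiction.

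Concretely, for a cone $C$ with vertex $v$ let $V_C\subset T_v\R^D$ be the set of vectors $\dot\kappa$ tangent at $v$ to curves $\kappa$ ingoing to $C$ in the sense of \eqref{kappa-cor}. By the chain rule, if $\varphi$ is a diffeomorphism with $\varphi(C)=C'$ then $\varphi'(V_C)=V_{C'}$, so $\dim\operatorname{span}V_C$ is invariant; moreover, since any $C^1$ (resp.\ $C^d$) is the image of the standard set \eqref{cone} under an analytic diffeomorphism whose tangent map is a linear isomorphism, it suffices to evaluate this invariant on the standard cones. First I would treat $C^1$, given by $(x^1)^2+\ldots+(x^{D-1})^2=(x^D)^2$: for any unit vector $\vec u\in\R^{D-1}$ the curve $\tau\mapsto(\tau\vec u,\tau)$ is ingoing with tangent $(\vec u,1)\in V_{C^1}$. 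Taking $\vec u=\pm\vec e_i$ and forming sums and differences produces every $\vec e_i$ ($i=1,\ldots,D-1$) together with $\vec e_D$ in the span, so $\operatorname{span}V_{C^1}=\R^D$ and the invariant equals $D$.

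Next I would treat $C^d$ with $d>1$, given by $\sum_{i=1}^{D-1}(x^i)^2=(x^D)^{2d}$. Along an ingoing curve one has $x^i(0)=0$ for all $i$, and the defining equation gives $|x^i(\tau)|\le(x^D(\tau))^d$ for $i=1,\ldots,D-1$. Dividing by $\tau$ and writing $x^i(\tau)/\tau=(x^D(\tau)/\tau)\,(x^D(\tau))^{d-1}\cdot(\text{bounded factor})$, I use that $x^D(\tau)/\tau\to\dot\kappa^D$ is finite while $(x^D(\tau))^{d-1}\to0$ as $\tau\to0^+$ (this is exactly where $d>1$ enters, via the argument already set up around \eqref{Dth}). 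Hence $\dot\kappa^i=0$ for $i=1,\ldots,D-1$, so every element of $V_{C^d}$ is a multiple of $\vec e_D$ and $\dim\operatorname{span}V_{C^d}=1$.

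Combining the two computations: a diffeomorphism mapping $C^1$ onto a $C^d$ with $d>1$ would force $D=\dim\operatorname{span}V_{C^1}=\dim\operatorname{span}V_{C^d}=1$, contradicting $D\ge3$; the opposite case is handled identically by applying the argument to $\varphi^{-1}$. The main obstacle is the vertex estimate for $d>1$—showing that all ingoing tangent directions collapse onto $\vec e_D$—but this is a short computation once the ingoing-curve framework of the preceding subsection is invoked. Everything else is a routine bookkeeping of linear spans.
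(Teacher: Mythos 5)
Your proof is correct and rests on the same core invariant as the paper's: the dimension of the linear span of the tangent vectors at the vertex generated by ingoing curves, which equals $D$ for $C^1$ and $1$ for $C^d$ with $d>1$, and which must be preserved by the tangent map of a diffeomorphism carrying one cone onto the other. The one genuine difference is in how you establish the collapse for $d>1$. The paper reparametrizes an ingoing curve by the coordinate $x^D$, which requires $\dot{\kappa}^D>0$; its statement 2 therefore only covers tangent vectors with \emph{non-zero} $D$-th component, and to close the argument it must separately prove (its statement 3(b), via the rotation trick $\phi=\varphi^{-1}\circ R\circ\varphi$ already used for Lemma \ref{dk>0}) that the diffeomorphism cannot annihilate the $D$-th component of such a vector. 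Your direct estimate $|x^i(\tau)|/\tau\le\bigl(x^D(\tau)/\tau\bigr)\,(x^D(\tau))^{d-1}\to 0$ applies to \emph{every} ingoing curve, including those with $\dot{\kappa}^D=0$, so you obtain the stronger statement that the whole set $V_{C^d}$ lies on the $x^D$-axis; the rotation argument and the bookkeeping of which tangent vectors keep a non-zero $D$-th component then become unnecessary. Nothing is lost and the proof is shorter; the remaining ingredients (reduction to the standard cones \eqref{cone}, vertex preservation via \eqref{I_cov}, the explicit spanning family of ingoing directions for $C^1$) match the paper's.
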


\begin{proof}
It is enough to prove the lemma in the case of cones given by \eqref{cone}. The proof we are going to present is based on an analysis of curves ingoing to the cones through their vertex $v=(0,\ldots,0)$. In the proof we will use notation introduced in Subsection \ref{cone-like} below Lemma \ref{orient}. 

To prove the lemma in the case of cones \eqref{cone} it is enough to justify the following three facts:         
\begin{enumerate}
\item vectors in $T_v\R^D$ of non-zero $D$-th components generated by curves ingoing to $C^1$ through $v$ span $T_v\R^D$. 
\item vectors in $T_v\R^D$ of non-zero $D$-th components generated by curves ingoing to $C^d$, $d>1$, through $v$ span a one dimensional subspace of $T_v\R^D$. 
\item if a diffeomorphism $\varphi$ maps $C^1$ into $C^d$ ($d>1$) or $C^d$ into $C^1$ then 
\begin{enumerate}
\item it preserves the vertex $v$ and maps a curve $\kappa$ ingoing to $C^1$ (or $C^d$) through $v$ into a curve $\varphi(\kappa)$ ingoing to $C^d$ (or $C^1$) through its vertex,
\item if the $D$-th component of $\dot{\kappa}$ is non-zero then the $D$-th component of $\varphi'\dot{\kappa}\in T_v\R^D$ is non-zero also.            
\end{enumerate}
\end{enumerate}
Indeed---all these together mean that if $\varphi$ maps $C^1$ into $C^d$ or {\em vice versa} then the tangent map $\varphi'$ is non-linear which cannot be true. Now let us show that the three statements above are true.

To prove the first statement consider a $D\times D$-matrix
\[
(\Lambda_i{}^j)=
\begin{pmatrix}
1 & 0 & \ldots & 0 & 0 & 1\\
0 & 1 & \ldots & 0 & 0 & 1\\
\vdots& \vdots & \ddots&\vdots & \vdots & \vdots\\
0& 0 & \ldots & 1 & 0 & 1\\
0& 0 & \ldots & 0 & 1 & 1\\
0 & 0 & \ldots& 0 & -1  & 1
\end{pmatrix}
\]
Obviously, every curve
\[
\tau\mapsto \kappa_i(\tau)=(\Lambda_i{}^1\tau,\ldots,\Lambda_i{}^D\tau)\in\R^D
\]
is ingoing to $C^1$ through $v$, the $D$-th component of each $\dot{\kappa}_i$ is non-zero and the vectors span $T_v\R^D$ (note that the determinant of $(\Lambda_i{}^j)$ is non-zero).     

To show that the second statement is correct consider a curve $\kappa:\tau\to(x^i(\tau))$ ingoing to $C^d$, $d>1$,  through its vertex $v$ such that $\dot{\kappa}^D>0$. Then we can use the coordinate $x^D$ to reparametrize the curve in the neighborhood of $v$---there exists $a>0$ and a function $z\mapsto \tau(z)$ defined on $]-a,a[$ such that $x^D(\tau(z))=z$ on $]-a,a[$. Then    
\[
]-a,a[\ni z\mapsto{\kappa}'(z):=\kappa(\tau(z))
\]     
is a well defined $C^1$-class curve in $\R^D$. Obviously,
\[
\dot{\kappa}=\frac{dx^D}{d\tau}\Big|_{\tau=0}\dot{\kappa}'=\dot{\kappa}^D\dot{\kappa}'.
\] 
A coordinate expression of the curve $\kappa'$ for $z\in]0,a[$ reads
\[
\kappa'(z)=(x^1(z),\ldots,x^{D-1}(z),x^D(z))=(z^dy^1(z),\ldots,z^dy^{D-1}(z),z)
\]               
where
\[
y^i(z):=\frac{x^i(z)}{z^d}, \ \ \ i=1,\ldots,D-1.
\] 
These functions satisfy
\[
\sum_{i=1}^{D-1}(y^i(z))^2=1
\]
(see the formula \eqref{cone}) and are bounded therefore. Since $\kappa'(0)=(0,\ldots,0)$ we have for $i=1,\ldots,D-1$
\[
(\dot{\kappa}')^i=\frac{dx^i}{dz}\Big|_{z=0}=\lim_{z\to 0^+}\frac{z^dy^i(z)-0}{z}=\lim_{z\to 0^+}z^{d-1}y^i(z)=0.
\] 
Hence
\[
\dot{\kappa}'=(0,\ldots,0,1) 
\]
and consequently
\[
\dot{\kappa}=(0,\ldots,0,\dot{\kappa}^D).
\]
This proves the second statement.

To prove the third statement assume for definiteness that a diffeomorphism $\varphi$ maps $C^1$ into $C^d$. Note that $\varphi$ has to preserve the point $v=(0,\ldots,0)$ being the vertex of both cones. Consequently, an open neighborhood\footnote{An open neighborhood $U$ of the vertex $v$ in $C^d$, $d\geq 1$, is an open subset $U$ of $C^d$ such that $U=U_0\setminus \{v\}$, where $U_0$ is a subset of the closure $\overline{C^d}$ such that $v\in U_0$.} of $v$ in $C^1$ has to be mapped onto an open neighborhood of $v$ in $C^d$ and $\varphi$ maps a curve $\kappa$ ingoing to $C^1$ through $v$ into $\varphi(\kappa)$ ingoing to $C^d$ through $v$. In particular this means that 
\[
(\varphi'\dot{\kappa})^D=(\dot{\varphi(\kappa)})^D\geq 0
\]
(see \eqref{Dth}).

Consider then a curve $\kappa$ ingoing to $C^1$ through its vertex $v$ such that $\dot{\kappa}^D>0$. Suppose that $(\varphi'\dot{\kappa})^D=0$. Then we can use a $D$-dimensional rotation $R$ around the $x^D$-axis to map $\varphi'\dot{\kappa}$ into $-\varphi'\dot{\kappa}$, and then $\varphi^{\prime-1}$ to map the latter vector into $-\dot{\kappa}$. Thus $(\phi'\dot{\kappa})^D<0$, where $\phi:=\varphi^{-1}\circ R\circ \varphi$. 

But $R$ preserves the cone $C^d$ and therefore the diffeomorphism $\phi$ maps an open neighborhood of $v$ in $C^1$ onto an open neighborhood of $v$ in $C^1$. Thus $\phi(\kappa)$ is again ingoing to $C^1$ through $v$ and $(\phi'\dot{\kappa})^D\geq 0$ (see again \eqref{Dth}) which contradicts the previous conclusion. This means that the supposition $(\varphi'\dot{\kappa})^D=0$ cannot be true and  we are left with
\[
(\varphi'\dot{\kappa})^D>0.
\]
Thus the third statement is shown to be correct.
\end{proof}

\end{document}